\crefname{equation}{}{}
\Crefname{equation}{}{}
\newtheoremstyle{mythmstyle}
  {8 pt} 
  {3 pt} 
  {} 
  {} 
  {\bfseries} 
  {.} 
  {.5em} 
  {} 
\theoremstyle{plain}
\def\thm@space@setup{%
  \thm@preskip=6pt plus 1pt minus 1pt
  \thm@postskip=\thm@preskip 
}
\newtheorem{theorem}{Theorem}[section]
\newtheorem{lemma}[theorem]{Lemma}
\newtheorem{corollary}[theorem]{Corollary}
\newtheorem{proposition}[theorem]{Proposition}
\newtheorem{remark}{Remark}
\newtheorem{example}{Example}
\newtheorem*{example*}{Example}
\newtheorem*{definition*}{Definition}
\newtheorem*{remark*}{Remark}
\crefname{definition}{\textbf{definition}}{definitions}
\Crefname{definition}{Definition}{Definitions}
\crefname{assumption}{\textbf{assumption}}{assumptions}
\Crefname{assumption}{Assumption}{Assumptions}
\begin{document}
\allowdisplaybreaks
\title{Cross-Validation with Confidence}

 \author[1]{Jing Lei}
\affil[1]{Department of Statistics, Carnegie Mellon University}

\maketitle

\begin{abstract}
Cross-validation is one of the most popular model selection methods
in statistics and machine learning.  Despite its wide applicability,
traditional cross-validation methods tend to select overfitting models, 
due to the ignorance of the uncertainty in the testing sample.
We develop a new, statistically principled 
inference tool based on cross-validation that takes into account
the uncertainty in the testing sample.  This new method 
outputs a set of highly competitive candidate models containing the best
one with guaranteed probability. As a consequence, our method
can achieve consistent variable selection in a classical linear regression setting,
for which existing cross-validation methods require unconventional split ratios.
When used for regularizing tuning parameter selection, the method can provide
a further trade-off between prediction accuracy and model interpretability.
We demonstrate the performance of the proposed method in several simulated and 
real data examples.
\end{abstract}


\section{Introduction}
\label{sec:introduction}

Cross-validation \citep{Stone74,Allen74,Geisser75} is one of the most popular methods
for model and tuning parameter selection in statistics and machine learning, due to its
conceptual simplicity and wide applicability.  The basic idea of cross-validation
is to fit and evaluate each candidate model on separate data sets so that the 
performance evaluation is unbiased. Despite its wide use, it is well-known that the
traditional cross-validation methods, including the leave-one-out and V-fold variants,
tend to select models that over fit. Early theoretical studies of cross-validation
\citep{Shao93,Zhang93,Yang07} indicate that, under a low-dimensional linear model, cross-validation cannot consistently select the correct model unless 
the training-testing split ratio tends to zero, which precludes both the 
leave-one-out and V-fold cross-validation.  Although these theory provide deep and
important insight for cross-validation procedures, in practice it is very rare to
use a small split ratio. In fact, most statistical packages use default
split ratios such as nine-to-one (10-fold cross-validation) or four-to-one (5-fold 
cross-validation).  These conventional split ratios remain popular primarily because
a small split ratio greatly reduces the training sample size and usually results in
much less accurate model fitting. 

It has been observed in the literature that cross-validation overfits when it fails to take into account the uncertainty in the testing sample. In particular, overfitting occurs if a smaller average test error comes as a result of the random sampling of testing data instead of a truly superior fit.
To take into account the testing sample uncertainty and fix the issue of overfitting, we develop a hypothesis testing framework for cross-validation, which we call cross-validation with confidence (CVC).
For each candidate model $m$, CVC tests the null hypothesis that
the regression functions estimated from model $m$ 
have the smallest predictive risk among all fitted candidate models, and calculates a $p$-value 
by comparing the cross-validated residuals of \emph{all} candidate models. 
The subset of candidate models for which the null hypotheses are not rejected is a confidence set for model selection.

Depending on the context, cross-validation, and hence CVC, can be used for different purposes.  Here we focus on two most common contexts of cross-validation: model selection, and tuning parameter selection for risk minimization.

\paragraph{Model selection.}
Model selection is concerned with finding the true model, or the one closest to the truth, from a given set of candidate models. Here the set of candidate models is usually discrete and fixed beforehand.  For example, the candidate set may consist of all subsets of a given collection of covariates, or all polynomials less than a certain degree.
A classical and well studied problem in the cross-validation literature is consistent variable selection in linear regression.  Even in the low-dimensional case,
cross-validation with a conventional split ratio is known to be inconsistent 
\citep{Shao93,Zhang93,Yang07}.  We show that the smallest model in the
confidence set output by CVC can achieve consistent variable selection even with a
conventional split ratio.

\paragraph{Tuning parameter selection and risk minimization.}
In many learning problems, the algorithm is indexed by a regularization parameter.  Such a tuning parameter can either be continuous, such as the $\lambda$ in the Lasso, or discrete, such as the number of steps in forward stepwise selection. The tuning parameter selection problem is concerned with finding the tuning parameter value, from a given finite candidate set, that leads to the smallest predictive risk.
Under this context, cross-validation essentially finds the tuning parameter value whose fitted models have small validated predictive risk.  We show that (i) if all the tests in the CVC procedure are conducted at type I error 
level $\alpha$ for some $\alpha\in(0,1)$, then the confidence set contains the best 
fitted model with probability at least $1-\alpha$, and  (ii) with high probability, 
the confidence set only contains fitted models that are highly competitive.
The main challenge of the testing problem is the high correlation and vastly different scaling
between the cross-validated residuals.  Our test uses some recent results in high-dimensional 
Gaussian comparison developed by \cite{Cherno2,Cherno13many}, which allows us to provide
theoretical guarantees for our procedure under weak assumptions.

It is worth noting that risk minimization and model selection are two very different goals, especially when regularization parameter selection is concerned.  Regularization typically works by reducing the variance at the cost of adding a small amount of bias, and the added bias may lead to an incorrect model due to the discrete nature of the model selection task. When cross-validation is used to select the regularization parameter, we are often in the context of risk minimization.  Without further assumptions, such as that in linear regression the non-zero regression coefficients are sufficiently large and the covariates are weakly correlated, it is generally hard to establish model selection consistency for cross-validation based methods. This is not the focus of the current paper. See \cite{WangZ11,FanT13,HuiWF15} for some examples of information criteria based methods that combines regularization parameter tuning and variable selection. 


\paragraph{Other related work.}
The problem of finding a set of plausible candidate models has been studied by several authors. \cite{HansenLN11} consider this problem for low-dimensional linear regression problems using a hierarchical testing procedure based on the $F$-test.  \cite{FerrariY15} extend and simplify this idea to cover high-dimensional problems, provided that there exists a good variable screening method.  \cite{Jiang08} develop the ``fence'' method which finds a set of candidate models that is likely to contain the true model.  All these methods only apply to the model selection problem with a fixed candidate set, and the output confidence set of models always includes the saturated model, which may be overfitting.

 There are a few heuristic attempts to fix the overfitting
issue of cross-validation. For example, \cite{TibshiraniT09} propose a heuristic bias 
correction for V-fold cross-validation, where the average cross-validated risk
 is offset by the standard error of the $V$ individual subsample cross-validated risk estimates.  
\cite{LimY16} use a stability-based criterion, which is the ratio between the empirical
variance of the fitted vectors and the squared $\ell_2$-norm
of the average fitted vector. Here the ``empirical variance'' and ``average'' are taken over
repeated fitting using subsamples. Both methods still output a single candidate model.
\cite{LeiGRTW16} describe a sample splitting method for assessing
variable importance in building predictive regression models, which also uses a hypothesis
testing idea. The theoretical properties are not clear for these methods.  
\cite{WassermanR09} propose a cross-validation
based variable selection procedure which uses a three-way sample split and a least squares
fit after Lasso.  \cite{FengY13} study model selection consistency of a version of 
cross-validation specifically designed for penalized likelihood estimators, where, in
addition to assumptions on the signal pattern and the design matrix, the split ratio is
required to vanish. 

Another focus of the cross-validation literature is risk consistency. In an early work,
\cite{Li87} show that cross-validation can select a model with prediction risk close to
the optimal candidate model. Similar results in high dimensional regression with Lasso
have been established recently by \cite{HomrighausenD13,ChetverikovL16} for $V$-fold
cross-validation procedures.  \cite{ChatterjeeJ15} study the risk consistency of
cross-validated Lasso with a specially designed 2-fold cross-validation variant. The
additional value of our CVC method is that it outputs a subset of candidate models with
close-to-optimal performance, so that the user has more flexibility in choosing a
particular model. For example, one can choose a model that is more parsimonious
than the one given by standard cross-validation, or one can pick a model in the confidence
set that contains a particularly desirable variable.

\paragraph{Organization of the paper.} In \Cref{sec:prelim} we briefly review 
cross-validation and illustrate the cause of overfitting. In \Cref{sec:method} we
present the CVC method and its extensions.  In \Cref{sec:theory} we provide general
coverage and performance guarantees of the CVC confidence set, and prove variable
selection consistency of the CVC method in a classical linear 
regression setting.
\Cref{sec:data} presents simulated and real data examples. \Cref{sec:discussion}
concludes the paper with some implementation details and open problems.
\Cref{sec:proof} contains technical proofs of main results.

\section{Review of cross-validation}
\label{sec:prelim}
Suppose we have data $D=(X_i,Y_i)_{i=1}^n$ independently drawn from a common distribution 
$P$ on $\mathbb R^p \times \mathbb R$ satisfying
$$
Y_i = f(X_i)+\epsilon_i
$$
where $f:\mathbb R^p\mapsto \mathbb R$ is an unknown function and $\epsilon$ satisfies $\mathbb 
E(\epsilon\mid X)=0$.  
We are interested in finding an approximation of $f$ so that we can predict the values of $Y$ 
for future observations of $X$.
Let $\hat f$ be an estimate of $f$.  We evaluate the quality of $\hat f$ by the predictive risk
$$
Q(\hat f) = \mathbb E\left[\ell(\hat f(X),Y) \mid \hat f\right]\,,
$$
where $(X,Y)$ is a future random draw from $P$, and $\ell(\cdot,\cdot)$ is a loss function.
A typical example of $\ell(\cdot,\cdot)$ is the squared loss $\ell(\hat y,y)=(\hat y-y)^2$.  

\subsection{Validation by sample splitting.}
Here we describe cross-validation using a unified notation for both tuning parameter selection and model selection.
Suppose we are given a finite set of candidate models $\mathcal M=\{1,2,...,M\}$, where the meaning of each $m\in\mathcal M$ can be a model or a tuning parameter value, depending on the context.  For each $m\in\mathcal M$ 
one can estimate $f$ by $\hat f_m$ using an estimating procedure specified by $m$. 

The basic idea of cross-validation originates from estimating the predictive risk using an independent sample.  First we split the index set $\{1,...,n\}$ randomly into a training set $I_{\rm tr}$
and a testing set $I_{\rm te}=I_{\rm tr}^c$.  The data is correspondingly split into a training 
data $D_{\rm tr}=\{(X_i,Y_i):i\in I_{\rm tr}\}$ and a testing data
$D_{\rm te}=\{(X_i,Y_i):i\in I_{\rm te}\}$. For each candidate model $m$, we obtain an estimate 
$\hat f_m$ using $D_{\rm tr}$.  The quality of $\hat f_m$ is approximated using the empirical 
average loss on $D_{\rm te}$:
$$
\widehat Q_{\rm ss}(\hat f_m)=\frac{1}{n_{\rm te}}\sum_{i\in I_{\rm te}}\ell(\hat f_m(X_i),Y_i)\,,
$$
where $n_{\rm te}$ is the size of $D_{\rm te}$.  Finally, cross-validation chooses the model
that gives the smallest empirical risk on testing data:
\begin{equation}\label{eq:m_ss}
\hat m_{\rm ss}=\arg\min_{m} \widehat Q_{\rm ss}(\hat f_m)\,.
\end{equation}
Here the subscript ``ss'' stands for sample splitting.
 
Conditioning on the sample split and the training data $D_{\rm tr}$, the Law of Large Numbers 
implies that $\widehat Q_{\rm ss}(\hat f_m)$ is close to $Q(\hat f_m)$.  Therefore, cross-validation 
approximately minimizes the predictive risk over all candidate estimates
$\{\hat f_m: m\in\mathcal M\}$.

\paragraph{Sources of uncertainty in sample-splitting validation.}
From \eqref{eq:m_ss} we can see that there are two separate sources of uncertainty when using $\hat m_{\rm ss}$ as our estimate of the best $m$.
First, the estimates $\hat f_m$ are obtained from a finite sample $D_{\rm tr}$, which may not be representative for the overall performance of $\hat f_m$ when taking into account the randomness of $D_{\rm tr}$.  We call it the \emph{estimation uncertainty}.

The second source of uncertainty comes from using $\hat Q_{\rm ss}(\hat f_m)$ to approximate $Q(\hat f_m)$.  This is a typical approximation error of sample mean, and the randomness is in the testing data $D_{\rm te}$.  We call it the \emph{validation uncertainty}.

\subsection{Reducing the estimation uncertainty: V-fold cross-validation.}
V-fold cross validation extends the idea of sample splitting by repeatedly fitting each candidate model on different subsets of the data.  V-fold cross-validation first divides $\{1,...,n\}$ into $V$ equal-sized subsets $I_1,...,I_V$, and for each $m\in\mathcal M$ obtains $\hat f_{m}^{(v)}$ using data points in $I_{-v}=I_v^c$.
Then the quality of $m$ as a fitting procedure is approximated by averaging the
predictive risk of the estimates:
$$
Q_{\rm cv}(m)=\frac{1}{V}\sum_{v=1}^V Q(\hat f_m^{(v)})
$$
and the model that gives the best estimates is
\begin{equation}\label{eq:best_m_cv}
m^*_{\rm cv}=\arg\min_{m\in\mathcal M} Q_{\rm cv}(m)\,.  
\end{equation}
In practice we do not have $Q_{\rm cv}(m)$ and have to approximate it using sample-split validated predictive risk of $\hat f_m^{(v)}$ and average over $v=1,...,V$:
$$
\widehat Q_{\rm cv}(m)=\frac{1}{n}\sum_{v=1}^V \sum_{i\in I_v} 
\ell(\hat f_m^{(v)}(X_i),Y_i)\,,
$$
and the best candidate is estimated by
\begin{equation}\label{eq:m_cv}
\hat m_{\rm cv}=\arg\min_{m\in\mathcal M}\widehat Q_{\rm cv}(m)\,.
\end{equation}
A key difference between V-fold cross-validation and sample-split validation is that by averaging the approximate predictive risk of $V$ different estimates 
$\{\hat f_m^{(v)}:1\le v\le V\}$, the V-fold cross-validation provides a more comprehensive assessment by exploring the variability of $\hat f_m$ due to the randomness of fitting sample.  This can be more clearly seen if we re-write $\widehat Q_{\rm cv}$ as follows.
$$
\widehat Q_{\rm cv}(m)=\frac{1}{V}\sum_{v=1}^V\left[\left(\frac{n}{V}\right)^{-1} \sum_{i\in I_v} 
\ell(\hat f_m^{(v)}(X_i),Y_i)\right]\approx \frac{1}{V}\sum_{v=1}^V Q(\hat f_m^{(v)})\,,
$$
where each term $(n/V)^{-1} \sum_{i\in I_v} 
\ell(\hat f_m^{(v)}(X_i),Y_i):=\widehat Q_{\rm ss}(\hat f_m^{(v)})$ is the sample-split validated predictive risk  of $\hat f_m^{(v)}$, which is close to $Q(\hat f_m^{(v)})$ by law of large numbers.
While both $\widehat Q_{\rm ss}(\hat f_m)$ and $\widehat Q_{\rm cv}(m)$ are unbiased estimate of the overall predictive risk if we use model (or tuning parameter index) $m$ to estimate $f$, the latter often has much smaller variance. 

The extreme case of $V=n$ is known as the leave-one-out cross-validation.  In modern 
applications it is often expensive and unnecessary to re-fit the model $n$ times, so most 
statistical packages use 10-fold or 5-fold cross-validation as default choices.


%
%

\subsection{Validation uncertainty.}\label{sec:validation_uncertainty}
We start our discussion from a simple example.
\begin{example}\label{exa:overfit}
  Let $Y=\mu+\epsilon$ where $\epsilon\sim N(0,1)$.  The covariate $X$ is irrelevant here.
Let  $\mathcal M=\{0,1\}$ where $m=0$ corresponds to the model $\mu=0$ so $\hat f_0\equiv 
0$, and $m=1$ corresponds to the model $\mu\in \mathbb R$ so $\hat f_1(x)\equiv 
\hat\mu_{\rm tr}$ with $\hat\mu_{\rm tr}$ being the training sample mean of $Y$. We consider 
the squared error loss.  
Now assume $\mu=0$, then $\hat f_0$ is always a better estimate than $\hat f_1$.  But
\begin{align*}
  \widehat Q(\hat f_1)-\widehat Q(\hat f_0) = \hat\mu_{\rm tr}^2 -2 \hat\mu_{\rm tr}\bar 
  \epsilon_{\rm te}\,,
\end{align*}
where $\bar \epsilon_{\rm te}$
is the testing data sample mean of $\epsilon$.  Because $\sqrt{n_{\rm tr}}\hat \mu_{\rm tr}$ 
and $\sqrt{n_{\rm te}}\bar \epsilon_{\rm te}$ are independent standard normal random variables, 
there is positive probability that $\widehat Q(\hat f_1)-\widehat Q(\hat f_0)<0$.
This probability becomes larger 
when $n_{\rm te}$ is smaller.
\end{example}

In \Cref{exa:overfit}, cross-validation makes decision based on the sign of $\widehat Q(\hat 
f_1)-\widehat Q(\hat f_0)$, which contains a true signal part $\hat\mu_{\rm tr}^2$ and a noise
part $-2\hat\mu_{\rm tr}\bar \epsilon_{\rm te}$.
Overfitting happens if the noise overrides 
the signal.  It is clear that in this case it is the uncertainty in the validation step that causes the error, since the estimates $\hat f_0$ has no uncertainty and we almost surely have $Q(\hat f_0)< Q(\hat f_1)$.

Since the problem is not due to the estimation error, then intuitively we cannot expect V-fold cross-validation to fix this issue.  In fact it is not hard to show that the probability of overfitting is bounded away from $0$ for all values of $V$ if V-fold cross-validation is used in \Cref{exa:overfit}.

In the context of model selection, one way to account for validation uncertainty is to use a training-validation split ratio close to zero.  In the context of \Cref{exa:overfit}, the results in \cite{Shao93,Zhang93,Yang07} suggest that $\hat m_{\rm ss}$ is consistent if $n_{\rm tr}/n\rightarrow 0$.  The intuition is that $\bar \epsilon_{\rm te}=O_P(n_{\rm te}^{-1/2})$, much smaller than $\hat\mu_{\rm}$, which has the order $n_{\rm tr}^{-1/2}$.

Although choosing a vanishing training ratio can theoretically address the problem of validation uncertainty for model selection, it is practically undesirable to fit the models with a much smaller sample size.  Indeed, the required split ratio cannot be achieved by any conventional V-fold cross-validation.  More importantly, using a vanishing training sample ratio may become problematic in the context of tuning parameter selection and risk minimization.  A common practice of cross-validation in tuning parameter selection is to apply the fitting procedure on the entire data set with the chosen tuning parameter as a final output. In many important cases, such as bandwidth selection for nonparametric regression \citep{Tsybakov09}, and choosing $\lambda$ for the Lasso \cite{DalalyanHL17}, the optimal tuning parameter may depend on the sample size.  Therefore, if $n_{\rm te}/n\rightarrow 0$, then the optimal $\lambda$ for a training sample of size $n_{\rm te}$ is unlikely to be a good choice applied to the entire sample.

\section{Cross-validation with confidence}
\label{sec:method}

%
%
%
Now we describe cross-validation with confidence, a procedure that accounts for the validation uncertainty without sacrificing the training sample size.  Our discussion starts from the simpler sample-split validation and then extends to the V-fold cross-validation.


\subsection{Sample-split validation with hypothesis testing}
 Here
 we consider sample-split validation and focus on the conditional inference given
the candidate
estimates $\{\hat f_m:m\in\mathcal M\}$ obtained from training data $D_{\rm tr}$. Recall that we evaluate
a fitted model $\hat f$ by $Q(\hat f)=\mathbb E(\ell(\hat f(X),Y)\mid \hat f)$. For each $m\in\mathcal M$, we consider a hypothesis 
testing problem
\begin{equation}\label{eq:hypothesis}
  \begin{array}{l}
H_{0,m}:~ Q(\hat f_m)\le Q(\hat f_{m'})~~\forall~m'\neq m\\
\quad\text{vs}\\
H_{1,m}:~ Q(\hat f_m)> Q(\hat f_{m'})~~\text{for some}~~m'\neq m\,.    
  \end{array}
\end{equation}
The hypothesis $H_{0,m}$ means that the fitted model $\hat f_m$ has the best predictive risk
among all fitted models, and $H_{1,m}$ means that there exists another fitted model whose predictive risk is strictly less than $\hat f_m$.

Let $\alpha\in(0,1)$ be a pre-chosen level of type I error. If we have obtained a $p$-value, 
denoted by $\hat p_{{\rm ss},m}$,
for the testing problem \eqref{eq:hypothesis} for each
$m$, the CVC procedure outputs the \textit{confidence set}
\begin{equation}\label{eq:admi}
  \mathcal A_{\rm ss} = \{\hat f_m: m\in \mathcal M,~\hat p_{{\rm ss},m}>\alpha\},
\end{equation}
which contains all the fitted models for which $H_{0,m}$ is not rejected.

\subsubsection{Calculating the $p$-values.}\label{sec:p_val}
Fix an $m\in\mathcal M$, and define random vector  $\xi_{m}=(\xi_{m,j}:j\neq m)$ as
\begin{equation}\label{eq:xi}
  \xi_{m,j}=\ell(\hat f_m(X),Y) - \ell(\hat f_j(X),Y),
\end{equation}
where $(X,Y)$ is a fresh independent sample from the same joint distribution.
Let $\mu_{m,j}=\mathbb E(\xi_{m,j}\mid \hat f_m,\hat f_j)$. Then the hypotheses
in \eqref{eq:hypothesis} can be written equivalently as
\begin{equation}\label{eq:hypo_xi}
  H_{0,m}: \max_{j\neq m}\mu_{m,j}\le 0,~~\text{vs}~~H_{1,m}: \max_{j\neq m}\mu_{m,j}>0\,.
\end{equation}
\eqref{eq:hypo_xi} is a multivariate mean testing problem.  There are a few 
challenges. First,
the dimensionality can be high, as the number of candidate models or tuning parameter values 
can be on the order of hundreds. Second, the coordinates of $\xi_{m}$ are often highly 
correlated, since they are
calculated from prediction errors of similar models fitted from the same training data and 
evaluated on the same testing data.  Third,
the coordinates of $\xi_m$ usually have vastly different scaling, depending on the similarity 
between model $j$ and model $m$.  For example, suppose we are considering linear regression with squared error as the loss function, if $\hat f_m$ and $\hat f_j$ are both
$\sqrt{n}$-consistent then $\xi_{m,j}$ has a variance of order $O(n^{-1})$.  But if $\hat f_m$ is
$\sqrt{n}$-consistent and $\hat f_j$ is seriously underfitting (e.g., a linear model missing a relevant variable, or Lasso with a very large $\lambda$), then the variance of $\xi_{m,j}$ can be as large as a constant.

Considering all these challenges, we derive our test statistic and $p$-value using a recent
result of Gaussian comparison and bootstrap of high-dimensional sample means due to 
\cite{Cherno1}, which
provides an accurate approximation of the maximum of a high-dimensional sample mean
using the maximum of a corresponding Gaussian random vector.

The $p$-value $\hat  p_{{\rm ss},m}$ is calculated as follows, using a studentized Gaussian multiplier bootstrap approach.

\begin{enumerate}
  \item For $i\in I_{\rm te}$, let $\xi_m^{(i)}$ be the realization of $\xi_m$ on the testing data point $(X_i,Y_i)$:
  $$
  \xi_m^{(i)}=(\xi_{m,j}^{(i)}:j\neq m)\,~~\text{where}~~\xi_{m,j}^{(i)}=\ell(\hat f_m(X_i),Y_i)-\ell(\hat f_j(X_i),Y_i)\,.
  $$
  \item For each $j\neq m$, let $\hat\mu_{m,j}$ and $\hat \sigma_{m,j}$ be the sample
  mean and standard deviation of $\{\xi_{m,j}^{(i)}:i\in I_{\rm te}\}$.
  \item Let $T = \max_{j\neq m}\sqrt{n_{\rm te}} \frac{\hat\mu_{m,j}}{\hat\sigma_{m,j}}$.
  \item For $b=1,...,B$
  \begin{enumerate}
    \item Generate iid standard Gaussian random variables $\zeta_i$, $i\in I_{\rm te}$.
    \item Let 
    $$T^*_b=\max_{j\neq m}\frac{1}{\sqrt{n_{\rm te}}} \sum_{i\in I_{\rm te}} 
    \frac{\xi_{m,j}^{(i)}-\hat\mu_{m,j}}{\hat\sigma_{m,j}}\zeta_i .$$
  \end{enumerate}
   \item $\hat p_{{\rm ss},m}=B^{-1}\sum_{b=1}^B \mathbf 1(T^*_b > T)$.
\end{enumerate}
The Gaussian comparison and bootstrap results automatically take into account of the dimensionality and 
correlation
between coordinates of $\xi_{m}$.  The studentization addresses the issue of scaling difference.
A similar studentized Gaussian approximation has been considered  
  in an unpublished manuscript \cite{Cherno13many} for different applications.


\subsection{V-fold cross-validation with confidence.}
The V-fold version of the above hypothesis testing procedure is much more complicated, primarily due to the dependence between the fitted models $\hat f_{m}^{(v)}$ and the validated predictive loss $\ell(\hat f_{m}^{(v)}(X_i),Y_i)$ ($i\in I_{-v}$).

\subsubsection{A proxy group effect model.}
To further illustrate the issue, the conditional inference for sample-split validation is based on the fact that the validated predictive losses
$$
\left[\ell(\hat f_m(X_i),Y_i),~m=1,...,M\right],~~i\in I_{\rm te}
$$
are conditionally independent given $\hat f_1,...,\hat f_m$.
But in V-fold cross-validation we cannot condition on $\hat f_1^{(v)},...,\hat f_m^{(v)}$ as they are dependent with the data points in $I_{-v}$.  More precisely, for $1\le i < i'\le n$, let $i\in I_{v}$ and $i'\in I_{v'}$. There are two possibilities.
\begin{enumerate}
  \item [] Case 1. $v=v'$.  This is similar to the sample-split case, the dependence between $\ell(\hat f_{m}^{(v)}(X_i),Y_i)$ and $\ell(\hat f_{m}^{(v)}(X_i),Y_{i'})$ comes from $\hat f_{m}^{(-v)}$.  They are conditionally independent given $\hat f_{m}^{(v)}$. We denote this conditional distribution
  of $\ell(\hat f_{m}^{(v)}(X_i),Y_i)$ given $\hat f_m^{(v)}$ as $\hat{\mathcal F}_m^{(v)}$.
  \item [] Case 2. $v\neq v'$. Now $\ell(\hat f_m^{(v)}(X_i),Y_i)$ and 
  $\ell(\hat f_m^{(v')}(X_{i'}),Y_{i'})$ depend on each other in that $(X_i,Y_i)$ is used as a training sample point in the construction of $\hat f_m^{(v')}$, and $(X_{i'},Y_{i'})$ is used in $\hat f_m^{(v)}$. Moreover, $\hat f_{m}^{(v)}$ and $\hat f_m^{(v')}$ share the same $(1-2/V)n$ out of a total of $(1-1/V)n$ training sample points.
\end{enumerate}
In the second case described above, $\ell(\hat f_m^{(v')}(X_i),Y_i)$ and $\ell(\hat f_m^{(v)}(X_{i'}),Y_{i'})$ are not independent samples from $\hat{\mathcal F}_m^{(v')}$ and $\hat{\mathcal F}_m^{(v)}$ due to the dependence between $(X_i,Y_i)$ and $\hat f_{m}^{(v')}$ (as well as $(X_{i'},Y_{i'})$ and $\hat f_m^{(v)}$).  Intuitively, since $\hat f_m$ is a fitting procedure obtained from $(1-1/V)n$ sample points, it seems reasonable to believe that the dependence between $(X_i,Y_i)$ and $\hat f_m^{(v')}$ is weak, so that we can practically ignore such a dependence and treat $\ell(\hat f_m^{(v')}(X_i),Y_i)$ and $\ell(\hat f_m^{(v)}(X_{i'}),Y_{i'})$ as if they were independent samples from $\hat{\mathcal F}_m^{(v')}$ and $\hat {\mathcal F}_m^{(v)}$, respectively.

Following this intuition, for $1\le v\le V$, let $\hat{\mathcal F}^{(v)}$ denote the joint conditional distribution of
$$
\left[\ell(\hat f_m^{(v)}(X),Y)~,1\le m\le M \right]
$$
given $(\hat f_m^{(v)},~1\le m\le M)$.
The we proceed by treating the cross-validated predictive loss
$$
\ell_i:=\left[\ell(\hat f_m^{(v_i)}(X_i,Y_i)),~1\le m\le M\right]\sim \hat {\mathcal F}^{(v_i)}
$$
as independent samples from the distributions $\hat{\mathcal F}^{(v_i)}$, where $v_i$ is the fold id that the $i$th sample point belongs to. 

\subsubsection{Approximate $p$-value for V-fold CVC.}\label{sec:vfold_cvc_pval}
Now we have an (approximately) independent sample $\{\ell_i:1\le i\le n\}$ in $\mathbb R^M$ of size $n$, where within each group $I_v$ the subsample $\{\ell_i:i\in I_v\}$ forms an iid sample from $\hat{\mathcal F}^{(v)}$. Therefore, we can extend the $p$-value calculation to this case by taking out the group mean effects.

For $m\in\mathcal M$, the V-fold version of the hypothesis testing problem becomes:
\begin{align}
  &H_{0,m}: ~\frac{1}{V}\sum_{v=1}^V Q(\hat f_m^{(v)})\le \frac{1}{V}\sum_{v=1}^V Q(\hat f_{m'}^{(v)}),~~\forall~m'\neq m
  \label{eq:cvc_test}\\
  &\quad \text{vs}\nonumber\\
  &H_{1,m}: ~\frac{1}{V}\sum_{v=1}^V Q(\hat f_m^{(v)})> \frac{1}{V}\sum_{v=1}^V Q(\hat f_{m'}^{(v)})~~\text{for some}~m'\neq m\,.\nonumber
\end{align}

The $p$-values are calculated as follows.
\begin{enumerate}
  \item Define, for all $j\in \mathcal M\backslash\{m\}$, $$
\xi_{m,j}^{(i)}=\ell(\hat f_m^{(v_i)}(X_i),Y_i)-\ell(\hat f_j^{(v_i)}(X_i),Y_i),~~1\le i\le n\,.
$$
\item $\hat\mu_{m,j}^{(v)}=\frac{V}{n}\sum_{i\in I_v}\xi_{m,j}^{(i)}$, for all $j\neq m$, $1\le v\le V$. These are the estimated group mean effect.
\item $\tilde \xi_{m,j}^{(i)}=\xi_{m,j}^{(i)}-\hat\mu_{m,j}^{(v_i)}$, for all $j\neq m$ $1\le i\le n$. These are group-wise centered difference of cross-validated predictive loss.
  \item Let $\hat\mu_{m,j}=V^{-1}\sum_{v=1}^V \mu_{m,j}^{(v)}$ be the overall mean effect and $\hat \sigma_{m,j}$ be the sample
  standard deviation of $\{\tilde\xi_{m,j}^{(i)}:1\le i\le n\}$.
  \item Let $T = \max_{j\neq m}\sqrt{n} \frac{\hat\mu_{m,j}}{\hat\sigma_{m,j}}$.
  \item For $b=1,...,B$
  \begin{enumerate}
    \item Generate iid standard Gaussian random variables $\zeta_i$, $1\le i \le n$.
    \item Let 
    $$T^*_b=\max_{j\neq m}\frac{1}{\sqrt{n}} \sum_{v=1}^V\sum_{i\in I_{v}} 
    \frac{\xi_{m,j}^{(i)}-\hat\mu_{m,j}^{(v)}}{\hat\sigma_{m,j}}\zeta_i .$$
  \end{enumerate}
   \item $\hat p_{{\rm cv},m}=B^{-1}\sum_{b=1}^B \mathbf 1(T^*_b > T)$.
\end{enumerate}
The level $\alpha$ confidence set of candidate models corresponding to the testing problem \eqref{eq:cvc_test} is
\begin{align*}
  \mathcal A_{\rm cv}=\{m\in\mathcal M: \hat p_{{\rm cv},m}\ge\alpha\}.
\end{align*}

%

The way we compute $\hat p_{{\rm cv},m}$ guarantees that
the $\mathcal A_{\rm cv}$ is non-empty, and also reveals the relationship between standard cross-validation and
CVC.
\begin{proposition}\label{pro:non_empty}
  If $\alpha<0.5$, then
  \begin{align*}
    \lim_{B\rightarrow\infty}\mathbb P\left(\hat m_{\rm cv}\in\mathcal A_{\rm cv}\right)=1\,,
  \end{align*}
  where $\hat m_{\rm cv}$ is the standard cross-validation output given in \eqref{eq:m_cv}.
\end{proposition}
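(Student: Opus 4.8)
The plan is to exploit the fact that the V-fold CVC test statistic is built from precisely the same quantities that ordinary cross-validation minimizes. Concretely, I would show that at the (random) index $m=\hat m_{\rm cv}$ the realized statistic $T$ is automatically non-positive, whereas every multiplier-bootstrap replicate $T^*_b$ is the coordinatewise maximum of a centered Gaussian vector and therefore exceeds any non-positive threshold with conditional probability at least $1/2$. Since $\alpha<0.5$, the Monte Carlo $p$-value $\hat p_{{\rm cv},\hat m_{\rm cv}}$, being an average of $B$ i.i.d.\ Bernoulli indicators with success probability at least $1/2$, concentrates above $\alpha$ as $B\to\infty$.

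First I would record the bookkeeping identity that ties the centered quantities in the $p$-value recipe to the cross-validated risks. From the definitions in \Cref{sec:vfold_cvc_pval}, $\hat\mu_{m,j}=V^{-1}\sum_{v=1}^V\hat\mu_{m,j}^{(v)}=\tfrac1n\sum_{i=1}^n\xi_{m,j}^{(i)}=\widehat Q_{\rm cv}(m)-\widehat Q_{\rm cv}(j)$, the last equality because $\widehat Q_{\rm cv}(m)=\tfrac1n\sum_{i=1}^n\ell(\hat f_m^{(v_i)}(X_i),Y_i)$. Consequently, when $m=\hat m_{\rm cv}$ minimizes $\widehat Q_{\rm cv}$ over $\mathcal M$ as in \eqref{eq:m_cv}, every $\hat\mu_{m,j}\le 0$ for $j\ne m$, and hence $T=\max_{j\ne m}\sqrt n\,\hat\mu_{m,j}/\hat\sigma_{m,j}\le 0$. (Throughout I assume $\hat\sigma_{m,j}>0$ for all $j\ne m$, which holds almost surely; any degenerate coordinate can simply be dropped from the maximum.)

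Next I would condition on the data $D$ and analyze the bootstrap replicates. Given $D$, the vector $G=\bigl(\tfrac{1}{\sqrt n}\sum_{v=1}^V\sum_{i\in I_v}\tfrac{\xi_{m,j}^{(i)}-\hat\mu_{m,j}^{(v)}}{\hat\sigma_{m,j}}\zeta_i:j\ne m\bigr)$ is a linear image of the i.i.d.\ standard normals $(\zeta_i)$, hence a jointly centered Gaussian vector, and $T^*_b=\max_{j\ne m}G_j$. For such a vector, the symmetry $G\stackrel{\mathrm d}{=}-G$ gives $\mathbb P(\max_j G_j\le 0\mid D)=\mathbb P(\min_j G_j\ge 0\mid D)\le\mathbb P(G_{j_0}\ge 0\mid D)=1/2$ for any fixed coordinate $j_0$, the last step using that $G_{j_0}$ is a non-degenerate centered Gaussian once $\hat\sigma_{m,j_0}>0$. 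Hence $\mathbb P(T^*_b>0\mid D)\ge 1/2$, and since $T\le 0$ at $m=\hat m_{\rm cv}$ we get $\mathbb P(T^*_b>T\mid D)\ge\mathbb P(T^*_b>0\mid D)\ge 1/2$.

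Finally, conditional on $D$ the replicates $T^*_1,\dots,T^*_B$ are i.i.d., so $B\,\hat p_{{\rm cv},\hat m_{\rm cv}}$ is $\mathrm{Binomial}(B,q)$ with $q=q(D):=\mathbb P(T^*_1>T\mid D)\ge 1/2>\alpha$. Hoeffding's inequality then gives $\mathbb P(\hat p_{{\rm cv},\hat m_{\rm cv}}<\alpha\mid D)\le\exp\{-2B(q-\alpha)^2\}\le\exp\{-2B(1/2-\alpha)^2\}$, a bound uniform in $D$; taking expectations and letting $B\to\infty$ yields $\mathbb P(\hat m_{\rm cv}\in\mathcal A_{\rm cv})\to1$. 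I do not expect a genuine obstacle here: the only steps requiring care, rather than difficulty, are the bookkeeping identity $\hat\mu_{m,j}=\widehat Q_{\rm cv}(m)-\widehat Q_{\rm cv}(j)$ (which is exactly what forces $T\le0$ at the cross-validation minimizer) and the harmless handling of the measure-zero event $\hat\sigma_{m,j}=0$.
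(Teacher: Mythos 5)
Your proof is correct and follows essentially the same route as the paper's: observe that $T\le 0$ at $m=\hat m_{\rm cv}$ because $\hat\mu_{\hat m_{\rm cv},j}=\widehat Q_{\rm cv}(\hat m_{\rm cv})-\widehat Q_{\rm cv}(j)\le 0$, note that the bootstrap statistic is the maximum of a centered Gaussian vector and so exceeds $0$ (hence exceeds $T$) with conditional probability at least $1/2>\alpha$, and let $B\to\infty$. Your write-up is simply a more detailed rendering of the paper's terse argument, replacing the appeal to the positivity of the upper-$\alpha$ quantile and the Weak Law of Large Numbers with an explicit symmetry bound and a uniform Hoeffding bound.
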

\begin{proof}
  Be definition of $\hat m_{\rm cv}$, we always have $T\le 0$.  Because $\alpha<0.5$, the upper $\alpha$ quantile
  of the maximum of a zero-mean Gaussian vector ($T_b^*$) must be positive.  The claim then follows
  from the Weak Law of Large Numbers.
\end{proof}
Following the same reasoning, we can see that among all $m\in\mathcal M$, the test statistic $T$ is only non-positive when $m$ is the cross-validation
choice $\hat m_{\rm cv}$.  Thus, empirically $\hat m_{\rm cv}$ almost always gives the largest $p$-value among all
candidate models.

%

\subsection{Implementation details}
\paragraph{Computational complexity of CVC.}  The V-fold CVC consists of two main parts. The first part is to split the sample, compute the estimates by holding out each fold, and compute the cross-validated predictive loss for each sample point.  This part is also required in ordinary V-fold cross-validation.  The additional computing required by CVC is in the second part, where the approximate $p$-values are calculated for each candidate model or tuning parameter value.  In this step, for each $m\in\mathcal M$, the algorithm will calculate the difference of predictive loss between $m$ and all the other $M-1$ candidates.  This requires $O(n(M-1))$ subtractions of two real numbers.  The re-centering and standardization costs another $O(n(M-1))$ operations.  The multiplier bootstrap with bootstrap sample size $B$ will require $O(n(M-1)B)$ operations.  Thus the total additional computing complexity of CVC is $O(nM^2B)$.  This additional cost is independent of the dimensionality of the problem.  In our numerical experiments, the total running time of CVC is usually a few (less than $5$) seconds when $n$ is a few hundreds, $M=50$, and $B=200$.  This is comparable to other model selection methods based on bootstrap and resampling methods such as the ``fence'' method \cite{Jiang08}.

\paragraph{Acceleration using inequality selection.}
The extra computing complexity of CVC has a quadratic dependence on $M$, the total number of candidate models or tuning parameter values.  This additional cost can be substantial when $M$ is large.  However, when $M$ is large, it is more likely a large proportion of candidate models or tuning parameters will lead to poor estimates.  Thus it is possible to eliminate such models quickly without using bootstrap by applying the inequality screening method developed in \cite{Cherno13many}.

More specifically, fix $\alpha'\in(0,1)$, and for each $m\in\mathcal M$ $j\neq m$,
define
$$
\hat{\mathcal J}_m=\left\{j\in\mathcal M
\backslash\{m\}:\sqrt{n}\frac{\hat\mu_{m,j}}{\hat\sigma_{m,j}}\ge
 -2\frac{\Phi^{-1}(1-\alpha'/(M-1))}{\sqrt{1-[\Phi^{-1}(1-\alpha'/(M-1))]^2/n}}\right\}\,,
$$
where $\Phi$ is the standard normal cumulative distribution function.

Intuitively, the candidates in $\hat{\mathcal J}_m^c$ are ``obviously'' inferior to $m$ and there is no need to invoke the bootstrap comparison for these models.  As a result, one can modify the last three steps of the CVC $p$-value calculation as follows.

\textbf{CVC $p$-value calculation with inequality selection}
\begin{enumerate}
  \item [1'-4'] The same as steps 1-4 as given in \Cref{sec:vfold_cvc_pval}.
  \item [5'] Let $T=\max_{j\in \hat{\mathcal J}_m} \sqrt{n}\frac{\hat\mu_{m,j}}{\hat\sigma_{m,j}}$.
  \item [6'] For $b=1,...,B$
  \begin{enumerate}
    \item Generate iid standard Gaussian random variables $\zeta_i$, $1\le i \le n$.
    \item Let 
    $$T^*_b=\max_{j\in\hat{\mathcal J}_m}\frac{1}{\sqrt{n}} \sum_{v=1}^V\sum_{i\in I_{v}} 
    \frac{\xi_{m,j}^{(i)}-\hat\mu_{m,j}^{(v)}}{\hat\sigma_{m,j}}\zeta_i .$$
  \end{enumerate}
   \item [7'] $\hat p_{{\rm cv},m}=B^{-1}\sum_{b=1}^B \mathbf 1(T^*_b > T)$.
\end{enumerate}
In practice, we set $\alpha'=\alpha/10$. In the rare case $\hat{\mathcal J}=\emptyset$, which corresponds to the case that $m$ is ``obviously'' better than all competitors, then we can set $\hat p_{{\rm cv},m}=1$.
Existing theoretical results (e.g. Theorem 4.5 of \cite{Cherno13many}) guarantee that, under the same conditions in \Cref{cor:coverage_vfold} below, the $p$-value calculated with inequality screening will increase the type I error by no more than $3\alpha'$, provided that $\alpha'$ and $\alpha$ are not too small.

\section{Theoretical Properties}
\label{sec:theory}
In our theoretical analysis, we first investigate the properties of the confidence sets $\mathcal A_{\rm ss}$ and $\mathcal A_{\rm cv}$ in terms of risk minimization. Then we consider model selection using the CVC method, under the context of subset selection for classical linear models.
%

We always assume that the test sample size is proportional to the total sample size: $n_{\rm te}/n \in [\delta,1-\delta]$
for some $\delta\in(0,1/2)$.  This is satisfied for each fold in the $V$-fold cross-validation with a constant $V$.

\subsection{Quality guarantees of confidence sets.}\label{sec:theory_conditional}
\subsubsection{Sample-split validation analysis}
In the analysis of sample-split validation, we condition on the fitted models $\{\hat f_m:m\in\mathcal M\}$.
Our main technical assumption is the tail behavior of the difference of prediction loss
$\xi_{m,j}=\ell(\hat f_m(X),Y)-\ell(\hat f_j(X),Y)$, where the randomness is in the future data pair $(X,Y)$.
We assume sub-exponential tail probabilities for standardized $\xi_{m,j}$, characterized by the Orlicz $\psi_1$ norm.
 Given $a\ge 1$,  the Orlicz $\psi_a$ norm of a  random variable $Z$ is
\begin{equation*}
  \|Z\|_{\psi_a} = \inf\left\{C>0:\mathbb E \exp\left(\frac{|Z|}{C}\right)^a\le 2\right\}\,.
\end{equation*}
Having a finite Orlicz $\psi_1$ norm is equivalent to having a sub-exponential tail, and implies finite moments of all orders and exponential concentration of sample mean. A finite
Orlicz $\psi_2$ norm is equivalent to a sub-Gaussian tail. See \cite{vdvWellner} and \cite{GeerL13} for further discussion on the Orlicz norm.

Following common practice, our analysis of the multiplier bootstrap assumes that $B$ is large enough
so the bootstrap sampling variability is negligible. That is, we take $\hat p_{{\rm ss},m}$ as the limiting value when $B\rightarrow\infty$.
\begin{theorem}\label{thm:conditional}
  Let 
  $\mu_{m,j}=\mathbb E(\xi_{m,j}\mid \hat f_m,\hat f_j)$, $\sigma_{m,j}^2={\rm Var}(\xi_{m,j}\mid
  \hat f_m,\hat f_j)$.  Assume 
  \begin{equation}\label{eq:tail_assumption}
    \sup_{m\neq j} \left\|\frac{\xi_{m,j}-\mu_{m,j}}{\sigma_{m,j}}\right\|_{\psi_1}\le A_n,
  \end{equation}
  for all 
  $m\neq j$ with 
  \begin{equation}\label{eq:rate_requirement}
    \frac{(A_n\vee 1)^6 \log^7(M\vee n)}{n^{1-c}}\rightarrow 0\,,
  \end{equation}
  for some $c\in(0,1)$. The following hold.
  \begin{enumerate}
    \item If 
    $\max_{j\neq m}\frac{\mu_{m,j}}{\sigma_{m,j}}\le r_n \sqrt{\frac{1}{n\log(M\vee n)}}$ 
    for some $r_n=o(1)$, then
    $\mathbb P(m\in\mathcal A_{\rm ss})\ge 1-\alpha+o(1)$.
    \item If $\alpha \in(n^{-1},1)$ and 
    $\max_{j\neq m}\frac{\mu_{m,j}}{\sigma_{m,j}}\ge c' (A_n\vee 1)\sqrt{\frac{\log (M\vee n)}{n}}$ 
    for a constant $c'$ depending only on $n/n_{\rm te}$, 
    then $\mathbb P(m\in\mathcal A_{\rm ss})=o(1)$.
  \end{enumerate}
\end{theorem}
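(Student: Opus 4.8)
The plan is to reduce both parts to the high-dimensional Gaussian comparison and multiplier-bootstrap machinery of \cite{Cherno1,Cherno13many}, the two parts differing only in how the ``bias'' $\sqrt{n_{\rm te}}\,\mu_{m,j}/\sigma_{m,j}$ compares to the $\sqrt{\log(M\vee n)}$ scale of the Gaussian maximum. Throughout I condition on the training data, so that $\{\xi_m^{(i)}:i\in I_{\rm te}\}$ are i.i.d.\ copies of $\xi_m=(\xi_{m,j})_{j\neq m}$ with mean $\mu_m$ and fixed coordinatewise standard deviations $\sigma_{m,j}$. I would first record three preliminary facts valid under \eqref{eq:tail_assumption}--\eqref{eq:rate_requirement}: (i) a scalar Bernstein bound for the sub-exponential summands $(\xi_{m,j}^{(i)}-\mu_{m,j})^2-\sigma_{m,j}^2$ yields $\max_{j\neq m}|\hat\sigma_{m,j}/\sigma_{m,j}-1|=o_P(1/\log(M\vee n))$, and similarly the entrywise distance between the sample and population correlation matrices of $\xi_m$ is $o_P(1/\log^2(M\vee n))$, so studentizing by $\hat\sigma_{m,j}$ instead of $\sigma_{m,j}$ is harmless at the precision below; (ii) consequently the conditional law of the centered studentized maximum $\widetilde T:=\max_{j\neq m}\sqrt{n_{\rm te}}\,(\hat\mu_{m,j}-\mu_{m,j})/\hat\sigma_{m,j}$ is within $o(1)$ in Kolmogorov distance of that of $Z:=\max_{j\neq m}W_j$, where $W$ is a centered Gaussian vector with the population correlation matrix of $\xi_m$, and the bootstrap quantiles of $T_b^*$ approximate those of $Z$ with the same accuracy (this is where \eqref{eq:rate_requirement} is used); and (iii) the Gaussian anti-concentration inequality of \cite{Cherno1} gives $\sup_x\mathbb P(Z\in[x,x+\eta])\le C\eta\sqrt{\log(M\vee n)}$, so the distribution function of $\widetilde T$ has modulus of continuity $o(1)+C\eta\sqrt{\log(M\vee n)}$ over windows of width $\eta$.

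For Part 1 the key algebraic step is $T=\max_{j\neq m}\bigl[\sqrt{n_{\rm te}}(\hat\mu_{m,j}-\mu_{m,j})/\hat\sigma_{m,j}+\sqrt{n_{\rm te}}\,\mu_{m,j}/\hat\sigma_{m,j}\bigr]\le \widetilde T+\eta_n$, where, using the hypothesis $\max_{j\neq m}\mu_{m,j}/\sigma_{m,j}\le r_n\sqrt{1/(n\log(M\vee n))}$ together with fact (i), $\eta_n\le(1+o_P(1))\sqrt{n_{\rm te}/n}\,r_n/\sqrt{\log(M\vee n)}=o_P(1/\sqrt{\log(M\vee n)})$ since $n_{\rm te}\le n$ and $r_n=o(1)$. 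Writing $\hat c_{1-\alpha}$ for the bootstrap critical value, $\mathbb P(m\in\mathcal A_{\rm ss})=\mathbb P(T<\hat c_{1-\alpha})\ge\mathbb P(\widetilde T<\hat c_{1-\alpha}-\eta_n)-o(1)$. Bootstrap validity (fact (ii)) gives $\mathbb P(\widetilde T<\hat c_{1-\alpha})\ge 1-\alpha-o(1)$, and shrinking the threshold by $\eta_n$ costs only $o(1)$ by the anti-concentration modulus in fact (iii), because $\eta_n\sqrt{\log(M\vee n)}=o_P(1)$; hence $\mathbb P(m\in\mathcal A_{\rm ss})\ge 1-\alpha-o(1)$.

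For Part 2, let $j^\star$ attain $\max_{j\neq m}\mu_{m,j}/\sigma_{m,j}$; then $T\ge\sqrt{n_{\rm te}}(\hat\mu_{m,j^\star}-\mu_{m,j^\star})/\hat\sigma_{m,j^\star}+\sqrt{n_{\rm te}}\,\mu_{m,j^\star}/\hat\sigma_{m,j^\star}$. The first summand is a single $\sigma_{m,j^\star}$-standardized mean of i.i.d.\ sub-exponential variables; a scalar Bernstein bound, using that \eqref{eq:rate_requirement} forces $A_n\le n^{(1-c)/6}=o(\sqrt{n_{\rm te}/\log n})$, shows that for any fixed $\kappa>0$ it exceeds $-\kappa\sqrt{\log(M\vee n)}$ with probability $1-o(1)$; the second summand is $\ge(1-o_P(1))\,c'(A_n\vee 1)\sqrt{n_{\rm te}/n}\,\sqrt{\log(M\vee n)}\ge(1-o_P(1))\,c'\sqrt{\delta}\,\sqrt{\log(M\vee n)}$ because $n_{\rm te}/n\ge\delta$. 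On the other hand, since $\alpha>n^{-1}$, the $(1-\alpha)$-quantile of the maximum of $M-1$ (approximately) unit-variance Gaussians is at most $\sqrt{2\log(M-1)}+\sqrt{2\log(1/\alpha)}\le C'\sqrt{\log(M\vee n)}$, so by fact (ii) $\hat c_{1-\alpha}\le C'\sqrt{\log(M\vee n)}+o_P(1)$. Choosing the constant $c'$ with $c'\sqrt{\delta}>C'+\kappa$ — the only place the dependence of $c'$ on $n/n_{\rm te}$ (equivalently on $\delta$) enters — gives $T>\hat c_{1-\alpha}$ with probability $1-o(1)$; on that event $\hat p_{{\rm ss},m}$ is the bootstrap probability that $T_b^*$ exceeds $T$, and since $T_b^*\le C'\sqrt{\log(M\vee n)}$ with high bootstrap probability, $\hat p_{{\rm ss},m}=o_P(1)$, whence $\mathbb P(m\in\mathcal A_{\rm ss})=\mathbb P(\hat p_{{\rm ss},m}>\alpha)=o(1)$.

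The hard part will be fact (ii): transferring the high-dimensional central limit theorem and multiplier-bootstrap consistency of \cite{Cherno1} to the \emph{studentized} maximum under only sub-exponential ($\psi_1$) tails and with coordinates of $\xi_m$ genuinely differently scaled. One must show that the extra randomness injected by the normalizers $\hat\sigma_{m,j}$, by the bootstrap's reliance on sample second moments, and by the slight enlargement of the null in Part 1 perturbs the relevant quantiles by only $o(1/\mathrm{polylog})$, which is exactly the accuracy that the rate condition \eqref{eq:rate_requirement} is engineered to deliver. Once fact (ii) is in hand, Part 1 is just Gaussian anti-concentration and Part 2 is just a scalar Bernstein inequality combined with the elementary bound on the Gaussian-maximum quantile.
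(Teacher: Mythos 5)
Your proposal follows essentially the same route as the paper's proof: condition on the training data, establish concentration of $\hat\sigma_{m,j}$ and of the sample correlation matrix, decompose the studentized maximum into a centered part plus a bias of size $\sqrt{n_{\rm te}}\,\mu_{m,j}/\hat\sigma_{m,j}$, invoke the high-dimensional Gaussian approximation and bootstrap/Gaussian-comparison results of Chernozhukov et al.\ together with anti-concentration to absorb the $o(1/\sqrt{\log(M\vee n)})$ shift in Part 1, and in Part 2 lower-bound the statistic by a single coordinate and compare with the explicit $O(\sqrt{\log(M\vee n)})$ bound on the critical value available when $\alpha>n^{-1}$, with $c'$ chosen via the split-ratio constant. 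The step you flag as the hard part (your fact (ii)) is exactly what the paper carries out explicitly via its concentration lemma, the quantile comparison $z(\alpha,\hat\Gamma)$ versus $z(\alpha+\delta,\Gamma)$, and Corollary 2.1 of the cited Gaussian approximation paper, so the argument is sound and matches the paper's.
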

We prove \Cref{thm:conditional} in \Cref{sec:proof_1}. 
Part (1) of \Cref{thm:conditional} guarantees the inclusion of competitive estimates. If $\hat f_m$ is the best 
estimate in $\{\hat f_m:m\in\mathcal M\}$, then $H_{0,m}$ is true and $m$ will be included in $\mathcal A_{\rm ss}$ with probability at least $1-\alpha+o(1)$ by just
taking $r_n=0$ in part (1).  In fact, 
the proof reveals that in the trivial case where $\mu_{m,j}=\mathbb E(\xi_{m,j}\mid \hat 
f_m,\hat f_j)=0$ for all $j\neq m$, we have $\mathbb P(m\in \mathcal A_{\rm 
ss})=1-\alpha+o(1)$.


\begin{remark}The sub-exponential tail condition \eqref{eq:tail_assumption} can be satisfied, for example, 
for linear regression with squared loss if the covariate $X$ and noise $\epsilon$ are sub-Gaussian. 
The Orlicz $\psi_1$ norm in \eqref{eq:tail_assumption} will depend on  
the estimated coefficients $\hat\beta_m$, which are typically restricted in a bounded set with high probability
on $D_{\rm tr}$.

In practice, the $\psi_1$ norm condition of \Cref{thm:conditional} is in general hard to verify exactly.  However, given the fact that 
  $\|Z\|_{q}\le q! \|Z\|_{\psi_1}$
  for any random variable $Z$, one can estimate $(q!)^{-1}\left\|(\xi_{m,m'}-\mu_{m,m'})/\sigma_{m,m'}\right\|_q$ for a few small integer values of $q$ (e.g., $q=1,2,3,4,...$) using the realized $\xi_{m,m'}$ obtained from the validation sample by plugging in $\mu_{m,m'}$ and $\sigma_{m,m'}$ with empirical estimates. If the estimates do not grow too fast when $q$ grows, then the $\psi_1$ norm condition seems plausible.
  In our simulations, the CVC method works reasonably well in linear regression with noise distribution being student's $t$ with three degrees of freedom.
\end{remark}

\subsubsection{Analysis of the group-effect proxy model}
Exact analysis of the V-fold cross-validation is hard due to the dependence between the $\xi_{m,j}^{(i)}$'s (using the notation in \Cref{sec:vfold_cvc_pval})
among different folds.  Empirically, we found that V-fold CVC
has much better power in eliminating suboptimal estimates. We can provide a partial justification by studying the $p$-values $\hat p_{{\rm cv},m}$ given in \Cref{sec:vfold_cvc_pval} under the group effect model where the dependence are ignored.

\paragraph{The group-effect model as a proxy.}
In particular, we assume the following model.
\begin{enumerate}
  \item [A1.] $(\xi_{m,j}^{(i)}:j\in\mathcal M\backslash\{m\})$ are independent across $i=1,...,n$, satisfying $\mathbb E(\xi_{m,j}^{(i)})=\mu_{m,j}^{(v)}$ and ${\rm Var}(\xi_{m,j}^{(i)})=(\sigma_{m,j}^{(v)})^2$ for all $i\in I_v$.
  \item [A2.] There exists a positive number $B_n$ such that $B_n^{-1}\le \frac{\sigma_{m,j}^{(v)}}{\sigma_{m,j}^{(v')}}\le B_n$ for all $m,j,v,v'$.
  \item [A3.] $\|(\xi_{m,j}^{(i)}-\mu_{m,j}^{(v)})/\sigma_{m,j}^{(v)}\|_{\psi_1}\le A_n$ for all $m,j$ and all $i\in I_v$.
\end{enumerate}

The following corollary is a straightforward extension of \Cref{thm:conditional}, providing partial justification of $\mathcal A_{\rm cv}$.
\begin{corollary}\label{cor:coverage_vfold}
Under Assumptions A1-A3, if 
\begin{equation}\label{eq:cvc_thm_condition}
\frac{(A_n\vee 1)^6 B_n^6 \log^7(M\vee n)}{n^{1-c}}\rightarrow 0  
\end{equation}
then the same results of \Cref{thm:conditional} hold for $\mathcal A_{\rm cv}$
with $\mu_{m,j}=V^{-1}\sum_{v=1}^V\mu_{m,j}^{(v)}$ and $\sigma_{m,j}^2=V^{-1}\sum_{v=1}^V(\sigma_{m,j}^{(v)})^2$.
\end{corollary}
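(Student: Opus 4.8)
The plan is to reduce the V-fold statement to the sample-split analysis of \Cref{thm:conditional} by passing to standardized, fold-centered variables and keeping track of how the fold-wise heterogeneity enters the rate condition. Fix $m$, write $v_i$ for the fold containing index $i$, and set $\zeta_{m,j}^{(i)}=(\xi_{m,j}^{(i)}-\mu_{m,j}^{(v_i)})/\sigma_{m,j}$ with $\sigma_{m,j}^2=V^{-1}\sum_{v=1}^V(\sigma_{m,j}^{(v)})^2$ as in the statement. By A1 these are independent across $i$ and mean zero; since $\sigma_{m,j}^2$ is the average of the $(\sigma_{m,j}^{(v)})^2$, A2 forces each per-coordinate variance $(\sigma_{m,j}^{(v_i)}/\sigma_{m,j})^2$ into $[B_n^{-2},B_n^2]$, and A3 gives $\|\zeta_{m,j}^{(i)}\|_{\psi_1}\le A_nB_n$. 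A direct computation shows that $\sqrt{n}\,\hat\mu_{m,j}=n^{-1/2}\sum_{i=1}^n\xi_{m,j}^{(i)}$ has mean $\sqrt{n}\,\mu_{m,j}$ and variance $\sigma_{m,j}^2$, so the pair $(\mu_{m,j},\sigma_{m,j}^2)$ plays exactly the role of the conditional mean and variance in \Cref{thm:conditional}.

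The one genuinely new ingredient is the fold-centering. I would first record that $\tilde\xi_{m,j}^{(i)}=\sigma_{m,j}\zeta_{m,j}^{(i)}-(\hat\mu_{m,j}^{(v_i)}-\mu_{m,j}^{(v_i)})$, where the fold-mean deviations $\hat\mu_{m,j}^{(v)}-\mu_{m,j}^{(v)}$ are $O_P(\sigma_{m,j}\sqrt{V/n})$ uniformly over the $O(VM^2)$ indices $(m,j,v)$, by the sub-exponential concentration implied by A3 together with a union bound. Consequently the studentizer $\hat\sigma_{m,j}^2$ concentrates on the pooled variance $\sigma_{m,j}^2$, and the bootstrap covariance $n^{-1}\sum_i\tilde\xi_{m,j}^{(i)}\tilde\xi_{m,k}^{(i)}/(\hat\sigma_{m,j}\hat\sigma_{m,k})$ on the fold-averaged correlation matrix of $(\zeta_{m,j}^{(i)})_{j\neq m}$, both uniformly in $(m,j,k)$ with a relative error controlled by $A_nB_n$, $V$ and $\log M$; this error is $o(1)$ under \eqref{eq:cvc_thm_condition}. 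The degrees-of-freedom loss from estimating $V$ fold means rather than a single overall mean is only $O(V/n)$ and is absorbed.

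With these reductions in hand, the rest is a verbatim rerun of the Gaussian-comparison and multiplier-bootstrap arguments of \Cref{thm:conditional}: the high-dimensional comparison bounds invoked there require only independence of the summands and a $\psi_1$ bound, not identical distributions, so they apply to $(\zeta_{m,j}^{(i)})_i$ with Orlicz parameter $A_nB_n$ in place of $A_n$. Because A2 forces $B_n\ge1$, we have $(A_nB_n\vee1)^6\le(A_n\vee1)^6B_n^6$, so \eqref{eq:cvc_thm_condition} is exactly \eqref{eq:rate_requirement} for the effective Orlicz bound, and both conclusions of \Cref{thm:conditional} --- inclusion of $m$ with probability at least $1-\alpha+o(1)$ when $\max_{j\neq m}\mu_{m,j}/\sigma_{m,j}=o((n\log(M\vee n))^{-1/2})$, and exclusion with probability $o(1)$ once this ratio exceeds $c'(A_nB_n\vee1)\sqrt{\log(M\vee n)/n}$ --- transfer directly to $\mathcal A_{\rm cv}$.

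The main obstacle, such as it is, is the uniformity bookkeeping in the second step: verifying that replacing one overall centering by $V$ fold-wise centerings keeps both the studentization and the bootstrap covariance estimate consistent simultaneously over all $M-1$ competing coordinates, and that fold-variance heterogeneity costs only the factor $B_n$ in the rate. Everything else is inherited from \Cref{thm:conditional}.
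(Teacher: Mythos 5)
Your proposal is correct and follows essentially the same route as the paper, which treats the corollary as a direct rerun of \Cref{thm:conditional} once one observes that the fold-centered, pooled-standardized differences satisfy $\bigl\|(\xi_{m,j}^{(i)}-\mu_{m,j}^{(v)})/\sigma_{m,j}\bigr\|_{\psi_1}\le A_nB_n$ and that $\hat\mu_{m,j}$ and $\sigma_{m,j}^2$ play the roles of the conditional mean and variance — exactly the reduction you make, with \eqref{eq:cvc_thm_condition} standing in for \eqref{eq:rate_requirement}. Your additional bookkeeping on the fold-wise centering, studentization, and bootstrap covariance simply fills in details the paper leaves implicit.
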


The only difference from \Cref{thm:conditional} is the involvement of $B_n$ in \eqref{eq:cvc_thm_condition}. This is necessary because
$$
\left\|\frac{\xi_{m,j}^{(i)}-\mu_{m,j}^{(v)}}{\sigma_{m,j}}\right\|_{\psi_1}
=\left\|\frac{\xi_{m,j}^{(i)}-\mu_{m,j}^{(v)}}{\sigma_{m,j}^{(v)}}\right\|_{\psi_1}\frac{\sigma_{m,j}^{(v)}}{\sigma_{m,j}}\le A_n B_n\,.
$$


\subsection{Model selection consistency in classical linear models.}\label{sec:linear}
Now we show that CVC can be used to overcome the overfitting issue of standard cross-validation
in a classical linear regression setting.  Assume that the regression function is linear
\begin{equation}\label{eq:linear}
Y = X^T\beta +\epsilon\,,  
\end{equation}
where $X\in \mathbb R^p$ has covariance $\Sigma$, and $\epsilon$ is independent noise with mean 
zero and variance $\sigma^2$.
Here for brevity we assume that $p$ is fixed. The argument and results can be extended to the
case that $p$ grows as a small polynomial of $n$ using a union bound argument.

Given a collection of subsets $\mathcal J = \{J_1,...,J_M\}\subseteq 2^p$, we would like to find the $m^*$ such that $J_{m^*}=\{j:\beta_j\neq 0\}$, assuming that the true model is included in the candidate set. The standard cross-validation (using a single split) estimates $\hat\beta_m$ by applying a least squares fit using the training data and
covariates in $J_m$, and evaluates the model by the residual sum of squares on the testing data.
CVC works analogously with the same least squares fitting and uses the squared residual as loss 
function.
For any given $\alpha$, CVC outputs an confidence set of candidate models.  To select the correct model, we consider the most parsimonious  model in the confidence set
\begin{equation}\label{eq:m_cvc}
  \hat m_{\rm ssc}=\arg\min_{m\in\mathcal A_{\rm ss}}|J_{m}|\,,
\end{equation}
and
\begin{equation}\label{eq:m_cvc_vfold}
  \hat m_{\rm cvc}=\arg\min_{m\in\mathcal A_{\rm cv}}|J_m|\,.
\end{equation}
According to \Cref{pro:non_empty}, $\mathcal A_{\rm cv}$ (and also $\mathcal A_{\rm ss}$) is non-empty with high probability when $\alpha<0.5$. So $\hat m_{\rm cvc}$ (and $\hat  m_{\rm ssc}$) is always well-defined when $\alpha<0.5$.

We consider the following conditions.

\begin{enumerate}
  \item [B1.] $\lambda_{\min}>0$, where $\lambda_{\min}$ is the minimum eigenvalue of $\Sigma=\mathbb E(XX^T)$.  
  \item [B2.] $X$ and $\epsilon$ have finite sixth moments: $\max_{1\le j\le p}\mathbb E(X_{j}^6)<\infty$,
  $\mathbb E(\epsilon^6)<\infty$.
  \item [B3.] $\max_{1\le i\le n}H_{ii}=O_P(n^{-1/2})$, where $H=X(X^TX)^{-1}X$ is the $n\times n$ hat matrix.
\end{enumerate}
Condition B1 is necessary for the best model to be uniquely defined.
Condition B2 requires finite sixth moments for $X$ and $\epsilon$, which is slightly stronger than those in previous work of cross-validation \citep{Zhang93,Shao93}.  We need this in order to control the studentized mean effect in an non-asymptotic manner, which needs $\mathbb E(|\xi_{m,m'}|^3)<\infty$, while $\xi_{m,m'}$ is a quadratic function of $\epsilon$ and $X$ since we are using the squared loss function.  Similar versions of condition B3 have also appeared in previous works.  For example, Assumption D of \cite{Zhang93} requires $\max_{i}H_{ii}=o(1)$.  Our additional rate requirement in B3 is not too stringent. For sub-Gaussian distributions, the maximum diagonal entry of the hat matrix is upper bounded with high probability by $C\log n/n$ for some constant $C$.


\begin{theorem}\label{thm:linear}
  Assume model \eqref{eq:linear} and assumptions B1-B3 hold.  Let
  $\hat m_{\rm ssc}$ be given as in \eqref{eq:m_cvc} with $\mathcal A_{\rm ss}$ being the sample split validation confidence
  set using type I error level $\alpha_n\rightarrow 0$ and $\alpha_n > n^{-1}$. Then $\mathbb P(\hat m_{\rm ssc}=m^*)\rightarrow 1$.
\end{theorem}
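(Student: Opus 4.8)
The plan is to show that, with probability tending to one, (i) the true model $m^*$ lies in $\mathcal A_{\rm ss}$, so that $|J_{\hat m_{\rm ssc}}| \le |J_{m^*}|$, and (ii) every strict subset of $J_{m^*}$ (i.e.\ every underfitting candidate) is excluded from $\mathcal A_{\rm ss}$. Together these force $\hat m_{\rm ssc}$ to be a model $m$ with $|J_m| \le |J_{m^*}|$ and $J_m \not\subsetneq J_{m^*}$; since the least-squares coefficients on any such $J_m$ that omits a true variable cannot reproduce $X^T\beta$, the only surviving possibility is $J_m = J_{m^*}$. (Candidates that strictly contain $J_{m^*}$ have larger cardinality and are automatically beaten by $m^*$, so overfitting is handled for free by the parsimony rule — this is exactly the point of CVC.) The technical work is therefore entirely in verifying the two probabilistic claims via \Cref{thm:conditional}.

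For claim (i), condition on $D_{\rm tr}$ and apply part (1) of \Cref{thm:conditional} to $m = m^*$. Here I need $\max_{j \ne m^*} \mu_{m^*,j}/\sigma_{m^*,j} \le r_n\sqrt{1/(n\log(M\vee n))}$ with $r_n = o(1)$. In fact for $m^*$ I expect to be able to take $r_n = 0$: for any $j$, $\mu_{m^*,j} = Q(\hat f_{m^*}) - Q(\hat f_j) \le 0$ because $\hat f_{m^*}$ is the least-squares fit on the correct support and $\hat f_j$ is least squares on some other support, so conditionally on $D_{\rm tr}$ the population squared-error risk of $\hat f_{m^*}$ is no larger than that of any competitor whenever the true model is among the candidates — this uses that $\hat\beta_{m^*}$ is consistent and that projecting onto a superset still includes the truth, but more simply just that $Q$ is minimized over all linear predictors by the conditional mean direction, which $\hat f_{m^*}$ approaches. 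I should be slightly careful: $\mu_{m^*,j}\le 0$ for \emph{overfitting} $j$ needs that adding irrelevant variables does not decrease population risk of a least-squares fit (true in expectation but here it is conditional on $D_{\rm tr}$, so I may need $\mu_{m^*,j} = O_P(1/n)$ rather than exactly $\le 0$, which still satisfies the part-(1) hypothesis with room to spare). The tail and rate conditions \eqref{eq:tail_assumption}–\eqref{eq:rate_requirement} hold with $A_n = O(1)$ because, under B2, $\xi_{m,j}$ is a quadratic in $(X,\epsilon)$ with finite sixth moments and (on the high-probability event that $\hat\beta_m$ lies in a bounded set, using B1) uniformly bounded $\psi_1$ norm after standardization; since $p$ is fixed, $M$ is fixed and $\log^7(M\vee n) = \log^7 n = o(n^{1-c})$. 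So $\mathbb P(m^* \in \mathcal A_{\rm ss}) \ge 1 - \alpha_n + o(1) \to 1$.

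For claim (ii), apply part (2) of \Cref{thm:conditional} to each underfitting $m$ (there are finitely many). The requirement is $\max_{j \ne m} \mu_{m,j}/\sigma_{m,j} \ge c'(A_n\vee 1)\sqrt{\log(M\vee n)/n}$; since the right side is $O(\sqrt{\log n / n}) \to 0$, it suffices to show that for an underfitting $m$ there is some competitor $j$ (take $j = m^*$, or any correct model) with $\mu_{m,j}$ bounded below by a positive constant. This is the heart of the argument: if $J_m$ omits a relevant variable $X_k$ with $\beta_k \ne 0$, then the best linear predictor using only $J_m$ has strictly positive excess population risk $\Delta_m > 0$ determined by $\beta$, $\Sigma$ and $\sigma^2$ — explicitly $\Delta_m$ equals $\beta$-restricted-to-the-omitted-block times the Schur complement of $\Sigma$, which is positive by B1; and since $\hat\beta_m \to$ this best linear predictor and $\hat\beta_{m^*}\to\beta$ (consistency of OLS on a fixed support, guaranteed by B1 and B2), we get $\mu_{m,m^*} = Q(\hat f_m) - Q(\hat f_{m^*}) \to \Delta_m > 0$ in probability, hence $\mu_{m,m^*} \ge \Delta_m/2$ with probability $\to 1$. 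The denominator $\sigma_{m,m^*}$ is $O_P(1)$ (again a finite-moment bound via B2), so $\mu_{m,m^*}/\sigma_{m,m^*}$ is bounded below by a constant and dominates the required threshold. Then part (2) gives $\mathbb P(m \in \mathcal A_{\rm ss}) = o(1)$, and a union bound over the finitely many underfitting $m$ removes all of them simultaneously with probability $\to 1$.

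Finally, combining: on the intersection of the three high-probability events (namely $m^* \in \mathcal A_{\rm ss}$, all underfitting models excluded, and the OLS-consistency / bounded-$\psi_1$ event used throughout), the only candidates left in $\mathcal A_{\rm ss}$ with minimal cardinality are ones whose support contains $J_{m^*}$ and has the same size — i.e.\ $J_{m^*}$ itself — so $\hat m_{\rm ssc} = m^*$, and the probability of this intersection tends to one.

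\medskip
\noindent\textbf{Where the difficulty lies.} The routine parts are the moment/$\psi_1$ bookkeeping (all of it reduces to "quadratic form in variables with six moments, coefficients in a bounded set"). The one step needing genuine care is the uniform control of the standardized $\psi_1$ norm $A_n = O(1)$ for \emph{all} pairs $(m,j)$ simultaneously, including pairs where one model badly underfits: there $\sigma_{m,j}$ is of constant order (good), but I must confirm the ratio $(\xi_{m,j}-\mu_{m,j})/\sigma_{m,j}$ still has a $\psi_1$ norm bounded by an absolute constant on the high-probability event $\{\hat\beta_m \in \text{bounded set}\}$ — this is where B2 (sixth moments, controlling numerator) and B1 (bounding $\sigma_{m,j}$ from below away from $0$, via $\lambda_{\min} > 0$) both get used, and B3 enters to ensure the empirical standardization in the bootstrap is accurate (no single testing point dominates). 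Making the "bounded set" event and its probability explicit, uniformly in $m$, is the main obstacle; everything downstream is a clean application of \Cref{thm:conditional}.
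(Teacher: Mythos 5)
Your high-level plan (retain $m^*$, eliminate every underfitting candidate, let the parsimony rule dispose of the strict supersets) is the same as the paper's, and your treatment of the underfitting models is essentially the paper's computation (constant conditional risk gap, $O_P(1)$ conditional variance). But there are two genuine gaps in the route you take. First, you run both steps through \Cref{thm:conditional}, and its hypothesis \eqref{eq:tail_assumption} cannot be verified under B1--B3: you claim $A_n=O(1)$ ``because $\xi_{m,j}$ is a quadratic in $(X,\epsilon)$ with finite sixth moments,'' but finite moments of any fixed order do not give a finite Orlicz $\psi_1$ norm --- B2 allows polynomial-tailed $X$ and $\epsilon$, for which the standardized $\xi_{m,j}$ is simply not sub-exponential. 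This is precisely why the paper's proof of \Cref{thm:linear} does not invoke \Cref{thm:conditional} at all: it rejects underfitting models directly (the studentized statistic is of order $\sqrt{n}$ versus a threshold of order $\sqrt{\log n}$), and it retains $m^*$ by first showing $\mu_{m^*,m}<0$ for all $m\neq m^*$ with probability tending to one and then bounding $\sqrt{n_{\rm te}}(\hat\mu_{m^*,m}-\mu_{m^*,m})/\hat\sigma_{m^*,m}=O_P(1)$ via a self-normalized moderate deviation inequality, which needs only the moment condition B2, while $z(\alpha_n,\hat\Gamma)\to\infty$ because $\alpha_n\to 0$.

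Second, and more substantively, your handling of the overfitting competitors inside the null for $m^*$ is where the real difficulty sits, and your fix does not work. Conditional on $D_{\rm tr}$, $Q(\hat f_{m^*})\le Q(\hat f_j)$ for $J_j\supsetneq J_{m^*}$ is not automatic, and your fallback ``$\mu_{m^*,j}=O_P(1/n)$, which satisfies the part-(1) hypothesis with room to spare'' is false: part (1) constrains the studentized ratio $\mu_{m^*,j}/\sigma_{m^*,j}$ to be $o(1/\sqrt{n\log(M\vee n)})$, and for an overfitting $j$ one has $\sigma_{m^*,j}\asymp n^{-1/2}$, so a positive bias of order $n^{-1}$ yields a ratio of order $n^{-1/2}$, which fails the hypothesis by a diverging factor (intuitively, a mean shift of order $\sigma/\sqrt{n}$ moves the test statistic by a constant, so there is no room to spare). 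To make either your route or the paper's work one must show that the positive part of $\mu_{m^*,j}$ is much smaller, and controlling this is exactly where the paper writes $n_{\rm tr}\,\mathbb E(\xi_{j,m^*}\mid D_{\rm tr})=\vec\epsilon^{\,T}(P_{J}-P_{J^*})\vec\epsilon+R_n$ with the nonnegative projection term and $R_n=O_P(n^{-1/2})$, uses B3 to bound $\|v_1\|_\infty$, and invokes the Kolmogorov--Rogozin anti-concentration inequality to show the projection term rarely falls below $|R_n|$ (no Gaussianity of $\epsilon$ is assumed, so this cannot be waved through). Your proposal never uses B3 and contains no anti-concentration or small-ball argument, so this step --- which the paper explicitly identifies as the main challenge --- is missing.
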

We prove \Cref{thm:linear} in \Cref{sec:proof_2}. One challenge is to show that
$\hat f_{m^*}$ has the smallest risk with high probability, which involves the 
Kolmogorov-Rogozin anti-concentration
inequality \citep{Rogozin61} since we do not assume Gaussianity for $X$ or $\epsilon$.

\begin{corollary}\label{thm:linear_vfold}
  Under the same conditions in \Cref{thm:linear}. Let $\hat m_{\rm cvc}$ be given as in \eqref{eq:m_cvc_vfold} using type I error level $\alpha_n\rightarrow 0$ and $\alpha_n>n^{-1}$. Then $\mathbb P(\hat m_{\rm cvc}=m^*)\rightarrow 1$.
\end{corollary}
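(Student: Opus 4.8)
The plan is to transcribe the proof of \Cref{thm:linear}, with the sample-split confidence set $\mathcal A_{\rm ss}$ replaced by $\mathcal A_{\rm cv}$ and with \Cref{thm:conditional} replaced by \Cref{cor:coverage_vfold}; as in the discussion preceding \Cref{cor:coverage_vfold}, the $p$-values $\hat p_{{\rm cv},m}$ are analyzed in the group-effect proxy model, where the cross-fold dependence of the $\xi_{m,j}^{(i)}$ is treated as negligible. Since $p$ is fixed, $M$ is bounded, and because the $V$ folds are exchangeable and each fold-wise least-squares fit uses the same number $n(1-1/V)$ of design points, the quantities $A_n$ and $B_n$ governing Assumptions A1--A3 are $O(1)$ (with the finite-moment substitute for A3 used exactly as in the proof of \Cref{thm:linear}), so the rate condition \eqref{eq:cvc_thm_condition} holds trivially.

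The first step is coverage of the true model, $\mathbb P(m^*\in\mathcal A_{\rm cv})\to 1$. By part~(1) of \Cref{cor:coverage_vfold} with $r_n=0$, this reduces to showing that, with probability tending to one, the fold-averaged risk of $m^*$ is the smallest:
\begin{equation*}
  \frac1V\sum_{v=1}^V Q\bigl(\hat f_{m^*}^{(v)}\bigr)\le \frac1V\sum_{v=1}^V Q\bigl(\hat f_{j}^{(v)}\bigr)\qquad\text{for every }j\neq m^*.
\end{equation*}
For candidates $j$ that omit a relevant variable this holds with a constant-order margin, since each $\hat f_j^{(v)}$ has a non-vanishing bias. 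The remaining candidates satisfy $J_j\supsetneq J_{m^*}$ (because $J_j\supseteq J_{m^*}$ together with $|J_j|\le|J_{m^*}|$ would force $j=m^*$), and for these the two fold-averaged risks differ only by $O_P(1/n)$; showing the difference has the right sign with high probability is precisely the place where, exactly as in \Cref{thm:linear}, one uses the Kolmogorov--Rogozin anti-concentration inequality, applied here fold by fold and then averaged. On the (high-probability) event that this risk ordering holds, conditioning on $D_{\rm tr}$ and invoking part~(1) of \Cref{cor:coverage_vfold} gives $\mathbb P(m^*\in\mathcal A_{\rm cv}\mid D_{\rm tr})\ge 1-\alpha_n+o(1)$; integrating and using $\alpha_n\to 0$ yields the claim.

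The second step is to rule out every candidate that omits a relevant variable. For such a $j$ one has $\mu_{j,m^*}=V^{-1}\sum_v\bigl[Q(\hat f_j^{(v)})-Q(\hat f_{m^*}^{(v)})\bigr]\ge c>0$ with high probability, while $\sigma_{j,m^*}$ remains bounded, so $\max_{j'\neq j}\mu_{j,j'}/\sigma_{j,j'}\ge\mu_{j,m^*}/\sigma_{j,m^*}$ eventually exceeds $c'(A_n\vee 1)\sqrt{\log(M\vee n)/n}\to 0$; since $\alpha_n>n^{-1}$, part~(2) of \Cref{cor:coverage_vfold} gives $\mathbb P(j\in\mathcal A_{\rm cv})=o(1)$, and a union bound over the finitely many such $j$ shows $\mathcal A_{\rm cv}$ contains no such model with probability tending to one. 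On the intersection of the two events above, every $m\in\mathcal A_{\rm cv}$ with $|J_m|\le|J_{m^*}|$ must equal $m^*$, since any other such $m$ omits a relevant variable; hence $\hat m_{\rm cvc}=\arg\min_{m\in\mathcal A_{\rm cv}}|J_m|=m^*$, and $\mathbb P(\hat m_{\rm cvc}=m^*)\to 1$.

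I expect the main obstacle to be the sign-of-the-gap step in the coverage part: because the risk difference between $m^*$ and an overfitting candidate is only $O_P(1/n)$, one cannot appeal to a law of large numbers and must reuse the delicate anti-concentration argument from \Cref{thm:linear}; the additional bookkeeping for the $V$ folds---fold exchangeability and the comparability of the proxy-model variances $\sigma_{m,j}^{(v)}$ across $v$ needed to verify A1--A3---is routine, and every other ingredient is a direct transcription of the \Cref{thm:linear} argument.
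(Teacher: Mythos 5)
There is a genuine gap, and it is precisely the point the paper's remark after the corollary is at pains to make. Your plan is to run the argument inside the group-effect proxy model (``the cross-fold dependence of the $\xi_{m,j}^{(i)}$ is treated as negligible'') and then invoke parts (1) and (2) of \Cref{cor:coverage_vfold}. But \Cref{cor:coverage_vfold} is only proved under Assumptions A1--A3, in particular A1, which postulates independence of the $\xi_{m,j}^{(i)}$ across all $i=1,\dots,n$; for the genuine V-fold procedure this is false, since the loss of a point in fold $v$ involves $\hat f_m^{(v)}$, which is trained on all the other folds. The corollary you are asked to prove assumes only B1--B3 and concerns the actual $\mathcal A_{\rm cv}$, so a proof that imports A1--A3 as a modeling convenience establishes a different (weaker) statement. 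A secondary soft spot is your claim that A3 can be replaced by ``the finite-moment substitute used in \Cref{thm:linear}'': the proof of \Cref{thm:linear} never uses the Gaussian-approximation machinery of \Cref{thm:conditional} at all (it only uses crude bounds on the bootstrap quantile plus anti-concentration and self-normalized moderate deviations with third moments), so there is no licensed moment-based version of \Cref{cor:coverage_vfold} part (1) to appeal to.

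The paper's proof avoids the proxy model entirely by asking for much less than exact type I error control. For $m\in\mathcal M_1$ it applies the \Cref{thm:linear} argument fold by fold to get $\inf_v \hat\mu_{m,m^*}^{(v)}/\hat\sigma_{m,m^*}^{(v)}\ge c_3$ with probability $1-o(1)$, and then uses the purely deterministic aggregation bound
\begin{equation*}
\frac{\hat\mu_{m,m^*}}{\hat\sigma_{m,m^*}}
=\frac{V^{-1}\sum_{v}\hat\mu_{m,m^*}^{(v)}}{\sqrt{V^{-1}\sum_{v}(\hat\sigma_{m,m^*}^{(v)})^2}}
\ge \frac{c_3}{\sqrt V},
\end{equation*}
together with the deterministic bound $z(\alpha_n,\hat\Gamma)\le\sqrt{2\log N}$ (Mill's inequality, valid for any $\hat\Gamma$ and $\alpha_n>n^{-1}$), so no distributional approximation for the dependent losses is needed. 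For coverage of $m^*$ it shows $\mu_{m^*,m}^{(v)}<0$ for all $m,v$ with probability $1-o(1)$ (again the \Cref{thm:linear} argument per fold), bounds the aggregated studentized statistic by $\sup_{m\neq m^*}\sup_{v}\sqrt n\,\hat\mu_{m^*,m}^{(v)}/\hat\sigma_{m^*,m}^{(v)}$, and controls each fold-wise term by the self-normalized moderate deviation bound (conditional independence \emph{within} a fold suffices) against $z(\alpha_n,\hat\Gamma)\to\infty$, finishing with a union bound over the finitely many $(m,v)$. If you replace your appeal to \Cref{cor:coverage_vfold} by this fold-wise scheme, the rest of your outline (constant-margin rejection of underfitting models, anti-concentration for overfitting ones, and the final ``most parsimonious element is $m^*$'' step) goes through as you describe.
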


\begin{remark}
Unlike \Cref{cor:coverage_vfold}, which is proved under a proxy group effect model, \Cref{thm:linear_vfold} is established for the genuine CVC procedure without ignoring the dependence between the validated predictive loss across different folds. This is possible since we do not ask for exact type I error control but simply a vanishing type I error probability.
\end{remark}

\begin{remark}[Choice of $\alpha_n$ in practice]
Our theory requires $\alpha_n$ to vanish as $n$ grows in order to control the probability of not including the best model. But this is necessary only if all candidate models are equally good. In practice it is often the case that the best model is strictly better than other candidate models, so the true type I error is smaller than the nominal level.  In practice we recommend using CVC with $\alpha_n=0.05$, the traditional level of statistical significance.  All numerical examples in this paper are conducted with $\alpha=0.05$.  
\end{remark}

\section{Numerical experiments}
\label{sec:data}
We illustrate the performance of the CVC method using synthetic and real data sets.  All cross-validation methods, including CVC are implemented using 5 folds.  CVC is implemented using inequality screening with $(\alpha,\alpha')=(0.05,0.005)$ and $B=200$. 

\subsection{Simulation 1: Subset selection consistency in linear models.}
In this simulation we demonstrate the model selection consistency of CVC to support the theory developed in 
\Cref{sec:linear}.  We adopt the synthetic data set in \cite{Shao93}, where the covariate $X$
has five coordinates including intercept, with a sample size $n=40$ (see Table 1 of \cite{Shao93} for the 
complete data set). The response variable $Y$ is generated by the linear model \eqref{eq:linear}
 with 
a pre-determined $\beta$ and independent noise $\epsilon$. We consider two true models
 $\beta^T=(2,0,0,4,0)$ and $(2,9,0,4,8)$ that correspond to a sparse model and a less sparse model.
The set of candidate models consists of all $16$ possible models that include the intercept term.  Two noise distributions are experimented: standard normal and student's $t$-distribution with three degrees of freedom.
For each type of noise distribution, we consider two noise-to-signal ratios.
The first one uses standard Gaussian and student's $t(3)$ to generate the noises.
The second one amplifies the noises by a factor of $2$.

The sample size $n=40$ is far from the regime of model selection consistency. 
To illustrate the effect of sample size, we generate additional sample points with $X$ drawn from a multivariate Gaussian 
distribution whose parameters are given by the sample mean and covariance of the original data set. We consider 
varying sample sizes between $n=40$ and $n=640$, where $n=40$ corresponds to the original data set.
We compare the performance of five methods, the ordinary cross-validation (cv), the most parsimonious model in $\mathcal A_{\rm cv}$ (cvc), BIC, the confidence set of models using $F$-test proposed by \cite{FerrariY15}, and the ``fence'' method \cite{Jiang08}.


\Cref{fig:shao_gaussian,fig:shao_t} summarize the frequency each method correctly selects the model, and the number of models that are contained in the confidence set, over $100$ 
independently generated data set.
For a very small sample size $n=40$, the CVC method does not perform as 
well as the other methods, as it often selects the underfitting models.  
The performance of 
CVC increases rapidly as the sample size $n$ increases, with
a perfect rate of correct selection as soon as $n$ reaches $320$.  Meanwhile, the performance of the 
standard cross-validation
does not improve as $n$ increases, with rates of correct selection staying away from $1$.  This agrees with the 
theory that CVC can consistently select 
the true model, while the standard cross-validation tends to overfit as long as the training sample ratio stays constant.

Overall, CVC is highly competitive when $n$ is moderately large and the true model is sparse.  Moreover, except in the high noise case with student's $t(3)$ noise, the CVC confidence set tends to be smaller than the F-test based confidence set \citep{FerrariY15} when $n$ is moderately large.

\begin{figure}
  \begin{center}
    \includegraphics[scale=0.7]{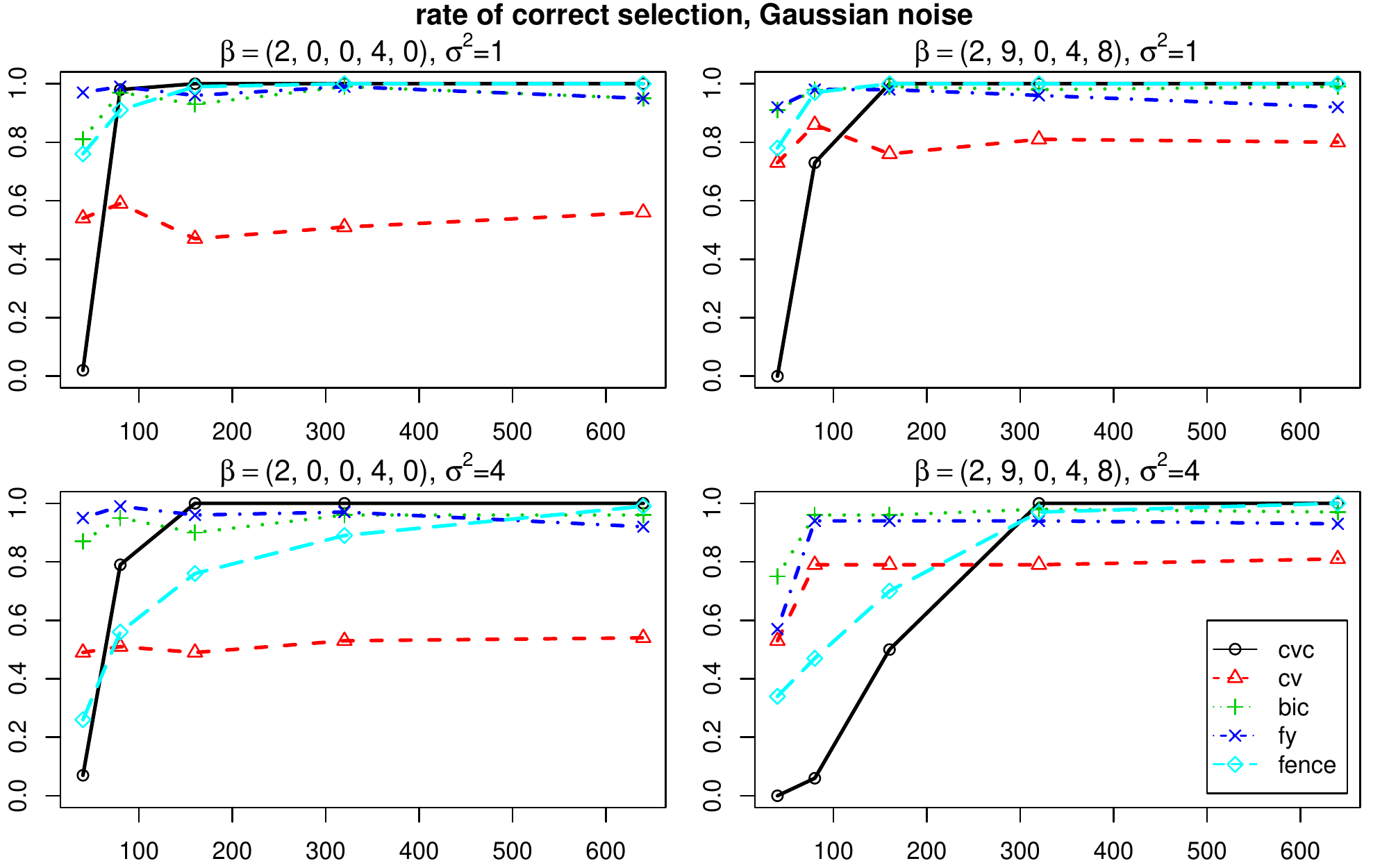}\\

\vspace{5pt}

    \includegraphics[scale=0.7]{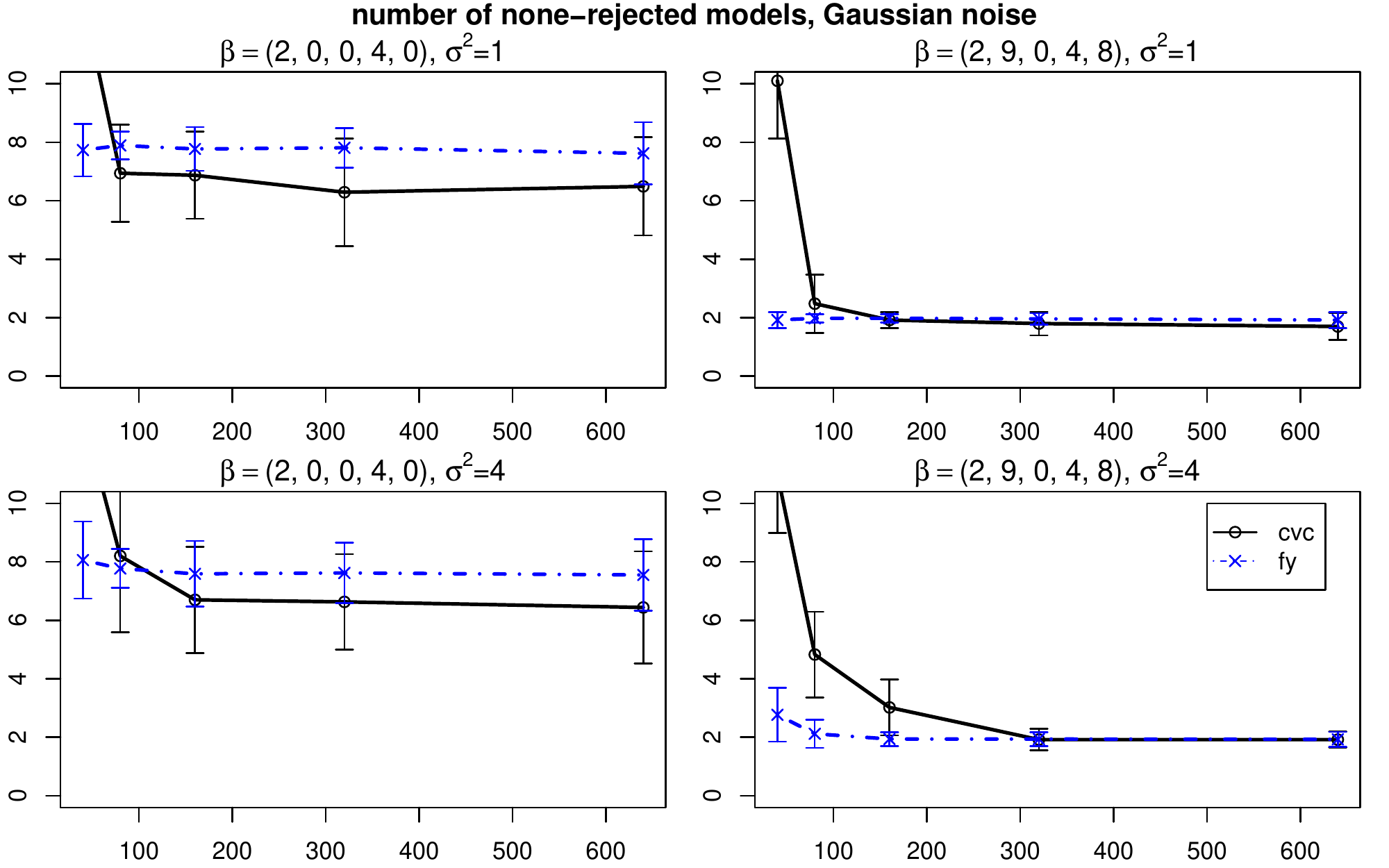}
    \caption{Simulation 1: linear regression with $p=5$ and Gaussian noise. First two rows: rate of correct subset selection as a function of sample size. Bottom two rows: number of none-rejected candidate models as a function of sample size.  
    \label{fig:shao_gaussian}}
  \end{center}
\end{figure}

\begin{figure}
  \begin{center}
    \includegraphics[scale=0.7]{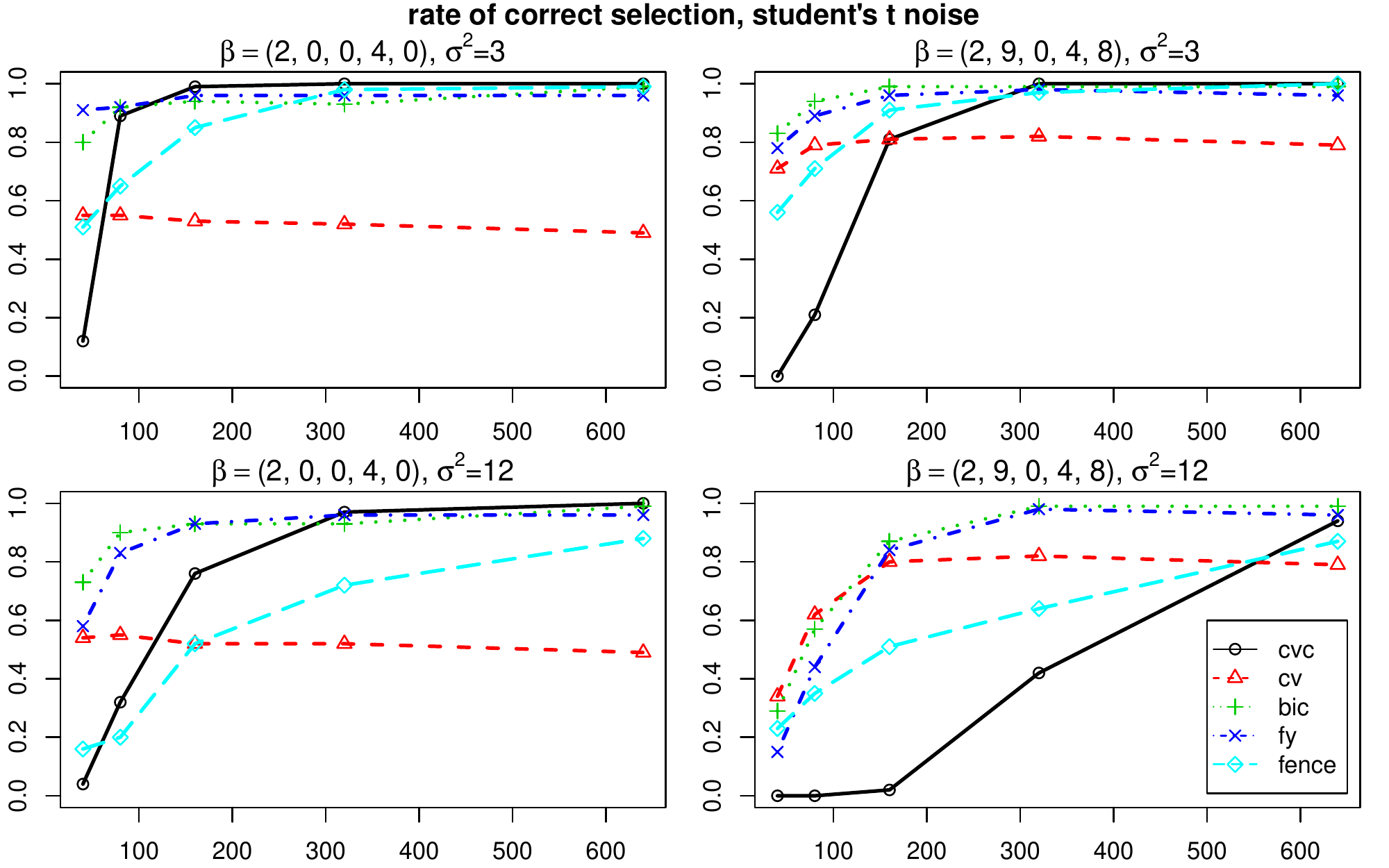}\\

\vspace{5pt}

    \includegraphics[scale=0.7]{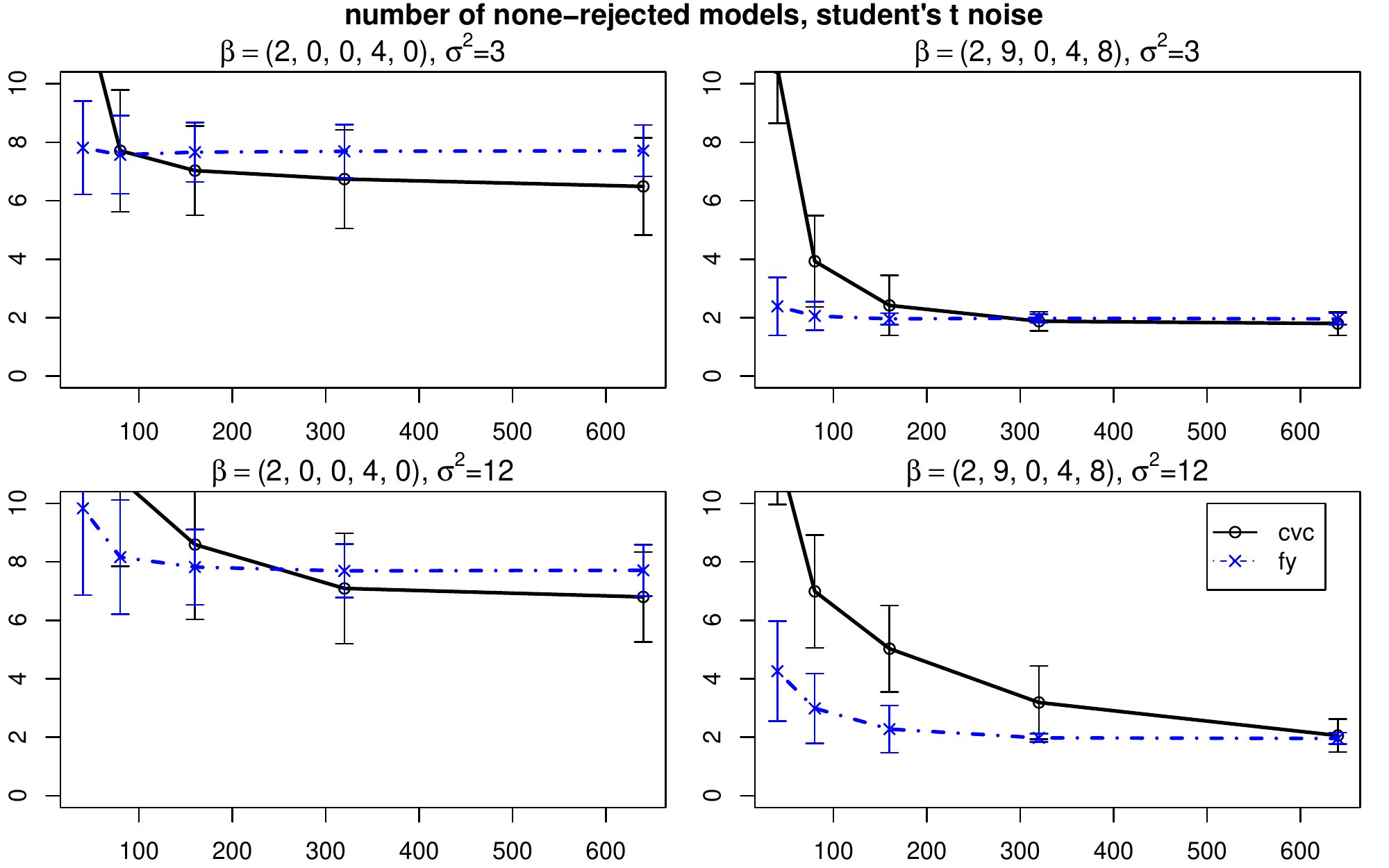}
    \caption{Simulation 1: linear regression with $p=5$ and student's $t_3$ noise. First two rows: rate of correct subset selection as a function of sample size. Bottom two rows: number of none-rejected candidate models as a function of sample size.
    \label{fig:shao_t}}
  \end{center}
\end{figure}

\subsection{Simulation 2: Tuning the Lasso for risk minimization}\label{sec:sim_lasso}
We use this simulation to demonstrate
the performance of CVC in choosing the Lasso tuning parameter for predictive risk minimization.  
We generate data from a linear regression model \eqref{eq:linear},
where $X\in\mathbb R^{200}$ has a multivariate Gaussian distribution $N(0,
\Sigma)$, and $\epsilon$ is an independent standard Gaussian noise.
We consider the squared loss $\ell(\hat y,y)=(\hat y-y)^2$.
 For any estimate $\hat \beta$ of $\beta$, 
the linear model
leads to a closed-form formula for $Q(\hat\beta)$:
$$
Q(\hat\beta) = \mathbb E \left[(Y-X^T\hat\beta)^2\mid\hat\beta\right] = (\hat\beta-\beta)^T\Sigma(\hat\beta-\beta)+1\,.
$$
We consider two settings of $\Sigma$.  In the \emph{identity setting}, $\Sigma$ is the $200\times 200$
identity matrix.  In the \emph{correlated setting}, the diagonal entries of $\Sigma$ are $1$ and off-diagonal entries
are $0.5$.  We also consider two settings of $\beta$.  In each setting, the first $s$ coordinates of $\beta$ take value $\pm 1$ with randomly chosen signs, and the next $s$ coordinates are generated from $N(0,I_s)$, and the remaining coordinates are $0$.  In the \emph{sparse setting}, we set $s=5$, while in the \emph{dense setting}, we set $s=25$ so that a quarter of the coordinates of $\beta$ are non-zero.

For each generated data set we first obtain a sequence of $50$ tuning parameters using \texttt{R} package \texttt{glmnet} on the whole data set. Using these $50$ candidate values of $\lambda$, we compare four methods: the original $V$-fold cross-validation (``cv''), the 1-standard-error rule \citep[``1se'',][]{TibshiraniT09}, the estimation stability method \citep[``es'',][]{LimY16}, and the most parsimonious model in the CVC confidence set with $\alpha=0.05$ (``cvc''), defined as
$$
\hat\lambda_{\rm cvc}=\max\{\lambda:\lambda\in \mathcal A_{\rm cv}\}\,.
$$
We run the simulation with sample size $n=200$, and repeat $100$ times for
 each combination of $\beta$ and $\Sigma$.

\paragraph{Evaluation and comparison of cross-validation based methods.} In the context of risk minimization using cross-validation, a common practice is to first find a promising value of tuning parameter (such as the one given by cross-validation or its variants) and then obtain a final estimate using the entire data set and the selected tuning parameter value.  This approach has been used in some existing modifications of cross-validation, including the 1-standard-error rule \citep{TibshiraniT09} and the stability rule \citep{LimY16}.  In the case of CVC and Lasso, we argue that one must take into account the sample size difference between the training sample used in cross-validation and the entire sample.  More precisely, in V-fold cross-validation, each candidate value of tuning parameter is validated based on the estimates obtained from a training sample of size $n(1-V^{-1})$, but the final estimate is obtained on a training sample of size $n$.  Intuitively, the optimal value of tuning parameter for one sample size is not necessarily also optimal for a different sample size, because when the sample size gets larger, less regularization is required. In fact, it has been suggested in the Lasso literature that the optimal value of $\lambda$ for both prediction and estimation is inversely proportional to the square root of the training sample size \citep{BickelRT09,vdGeer08,MeinshausenY09,DalalyanHL17}.  Therefore, in our implementation of CVC we make use of this observation by obtaining the final estimate on the entire data set  with tuning parameter value $\lambda=\sqrt{1-V^{-1}}\hat\lambda_{\rm cvc}$.

\begin{figure}
  \begin{center} \includegraphics[scale=0.36]{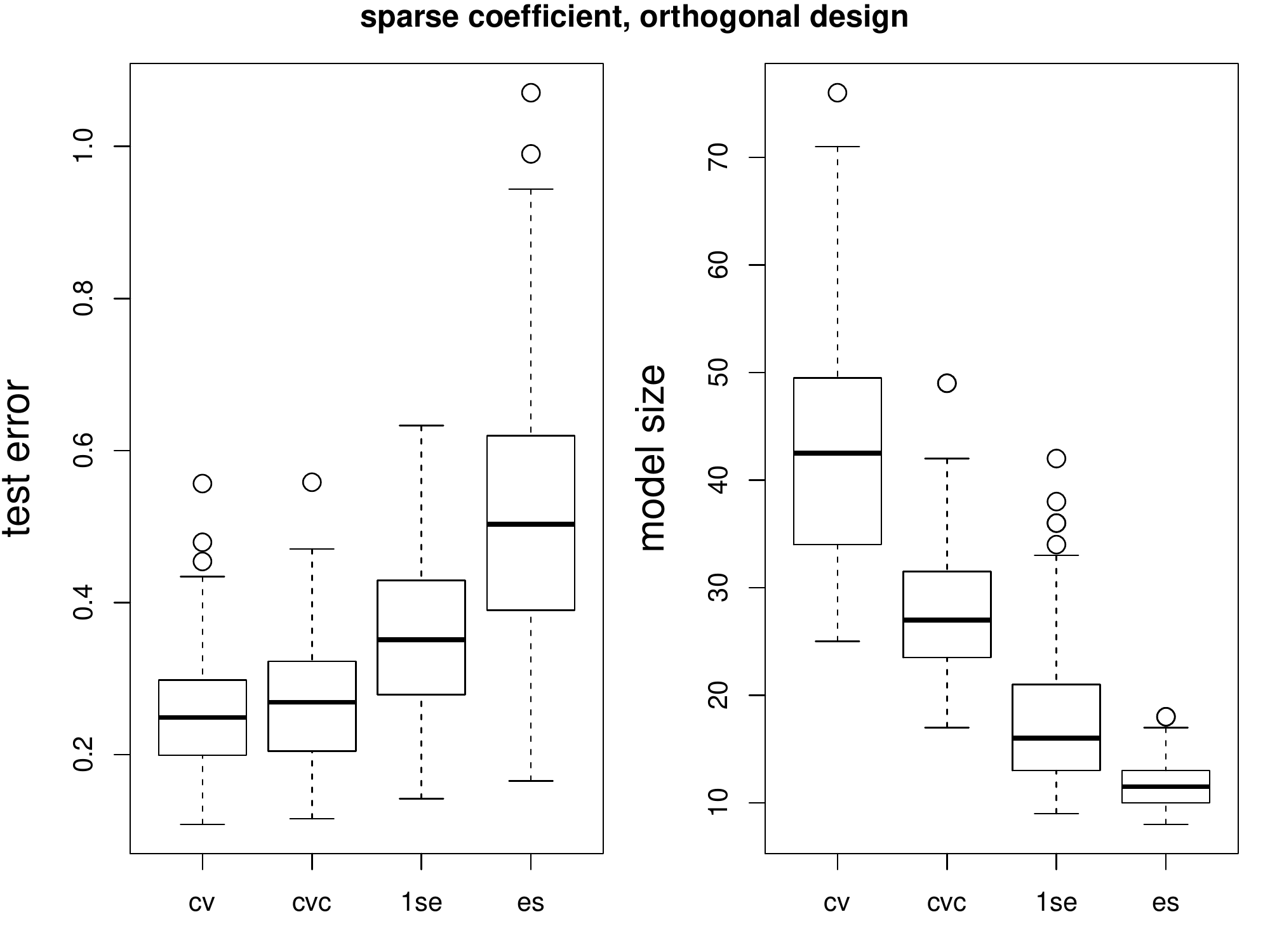}\includegraphics[scale=0.36]{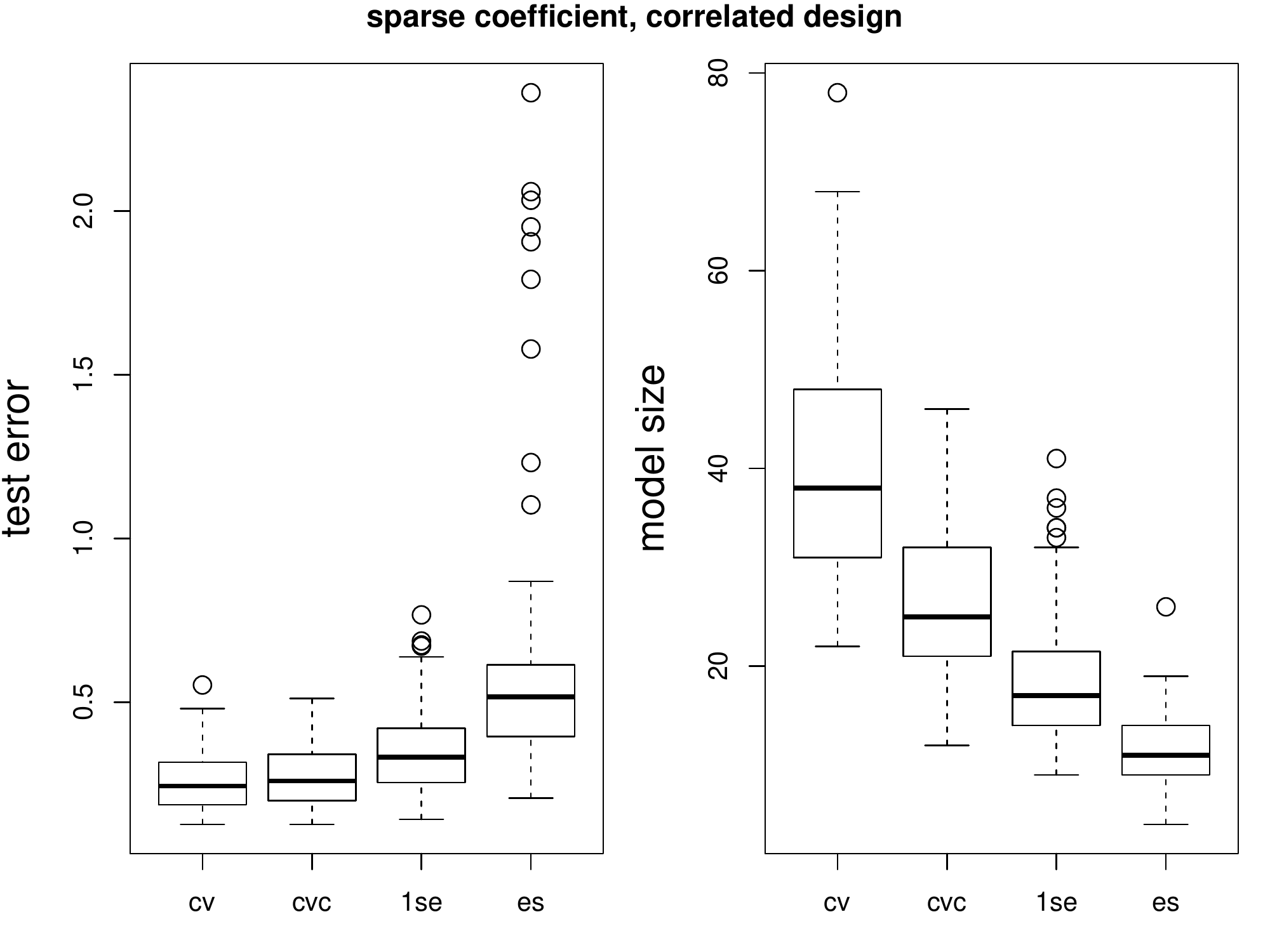}\\
                 \includegraphics[scale=0.36]{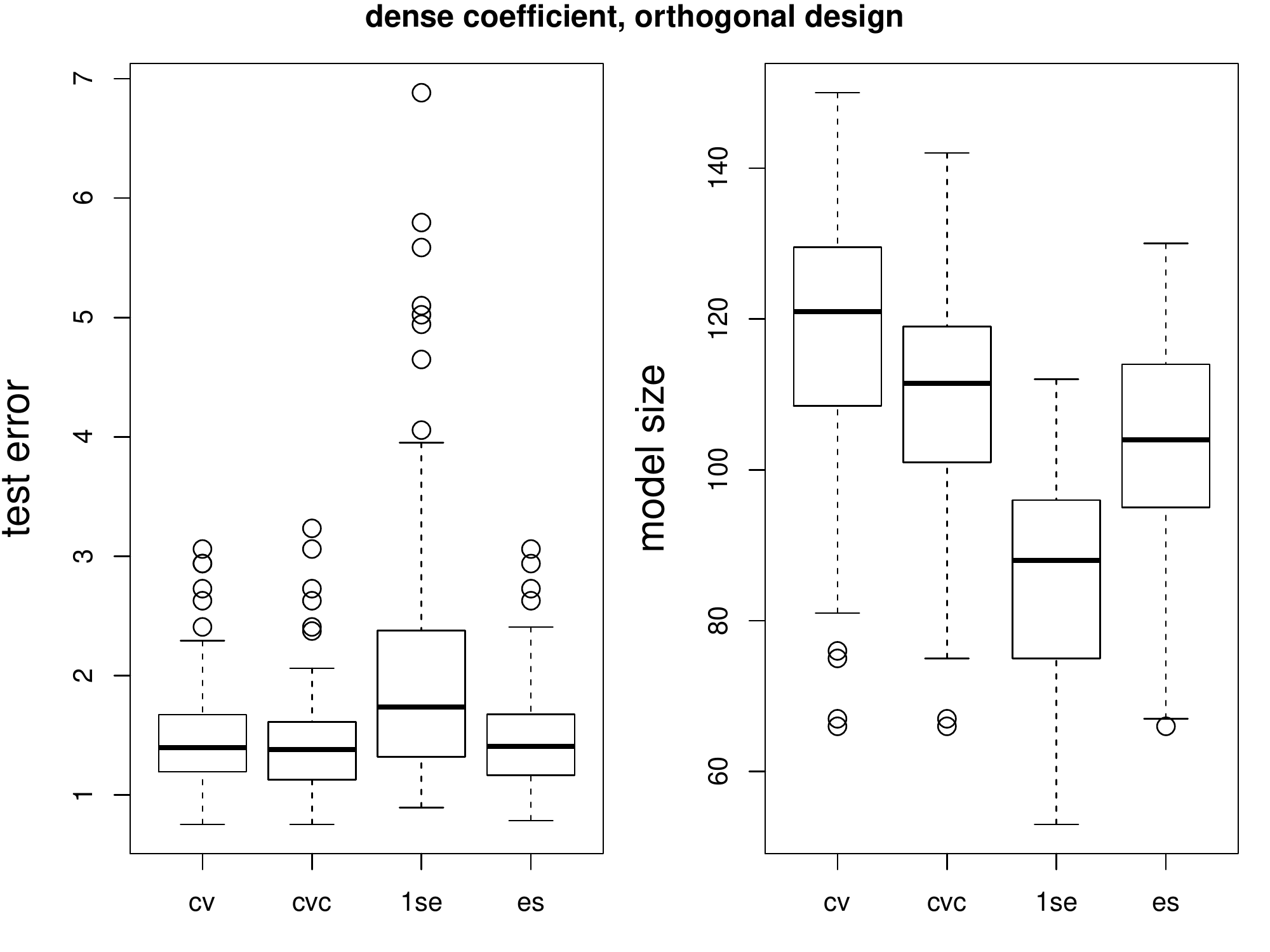}\includegraphics[scale=0.36]{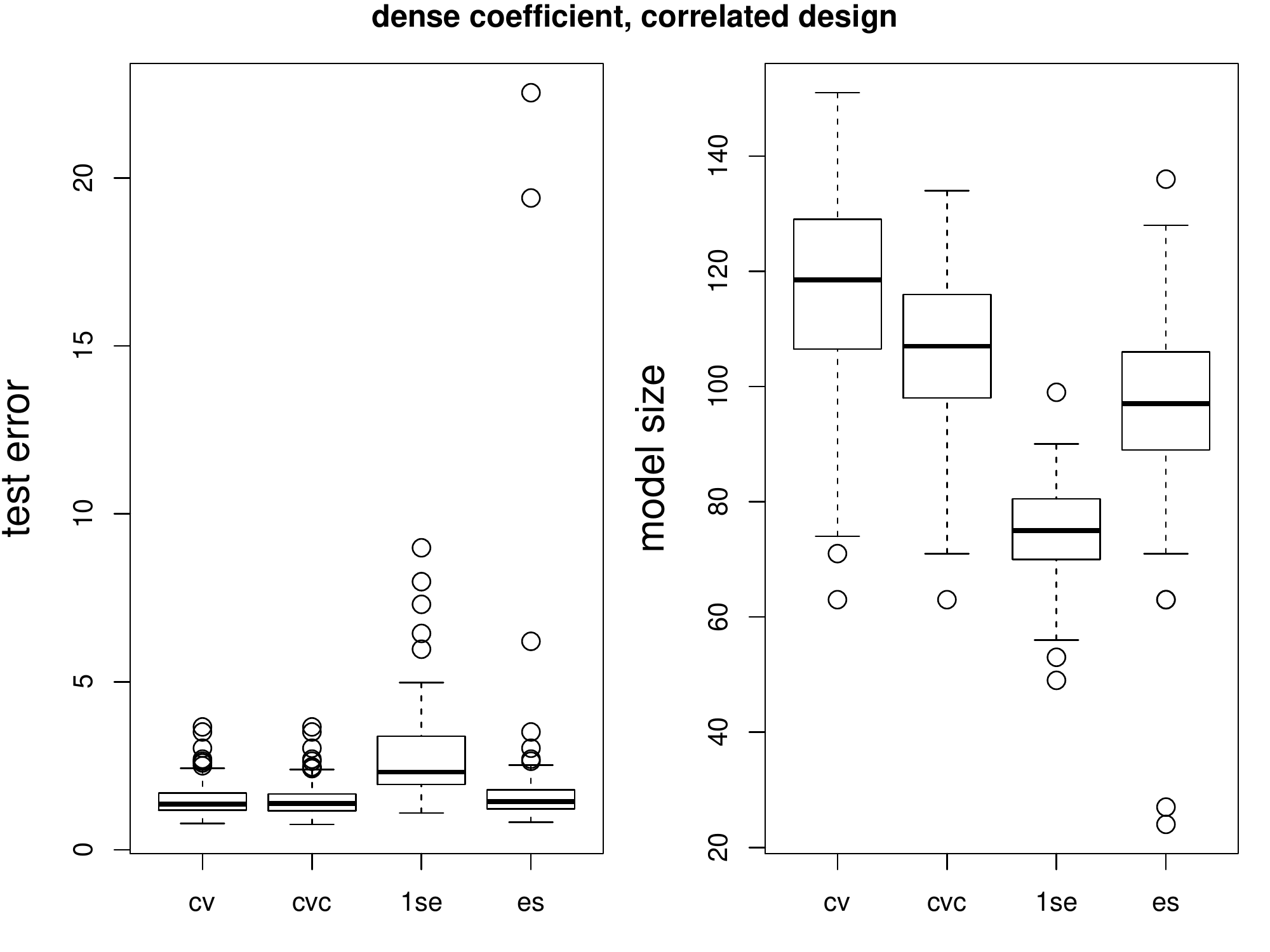}\\
  \caption{Risk and model size of four methods of Lasso tuning parameter selection.
  Top row: sparse model. Bottom row: dense model.  Columns 1-2: orthogonal design. Columns 3-4: correlated design.\label{fig:lasso_sim}}
  \end{center}
\end{figure}

\Cref{fig:lasso_sim} summarizes the performance.  We can see that the CVC method offers highly competitive prediction accuracy comparing to standard cross-validation, while using fewer predictors. Both CVC and ES have good adaptivity to the model sparsity, using more predictors when the model is dense.  Overall, CVC provides a better trade-off between interpretability and predictive accuracy compared to other methods. Moreover, the median value of $\mathcal A_{\rm cv}$ is between $4$ and $5$ for all settings, and the $\mathcal A_{\rm cv}$ covers the best tuning parameter value with a frequency almost exactly $0.95$ in all settings.

\subsection{The diabetes data example.}
We apply the CVC method to the diabetes data, which has been used in \cite{LARS} to illustrate the LARS algorithm in sparse linear regression.
The data contains $n=442$ sample points, with ten covariates including age, sex, body mass index, blood pressure, and six blood serum measurements.
If we consider all the quadratic terms as in the literature, there will be $64$ predictors (the sex variable is coded as $\{0,1\}$ 
and has no square term).
The response variable is a quantitative measure of diabetes progression one year after the covariates are measured.
Following common practice, we center and standardize all variables.

\Cref{fig:diabetes1} shows the result of 5-fold CVC with $\alpha=0.05$ on the full data set.
The base estimator is the Lasso, with $50$ equally spaced values (on the log scale) of the penalty parameter $\lambda$.
We plot the cross-validated test errors for all candidate values of $\lambda$.  The triangle points correspond to
the tuning parameter values in $\mathcal A_{\rm cv}$, and the solid triangle corresponds to the $\lambda$ chosen by standard
5-fold cross-validation.  5-fold CVC outputs five candidate values of $\lambda$, one of which gives a very close test error as the standard cross-validation but a more parsimonious
model fit.

\begin{figure}[t]
  \begin{center}
    \includegraphics[scale=0.5]{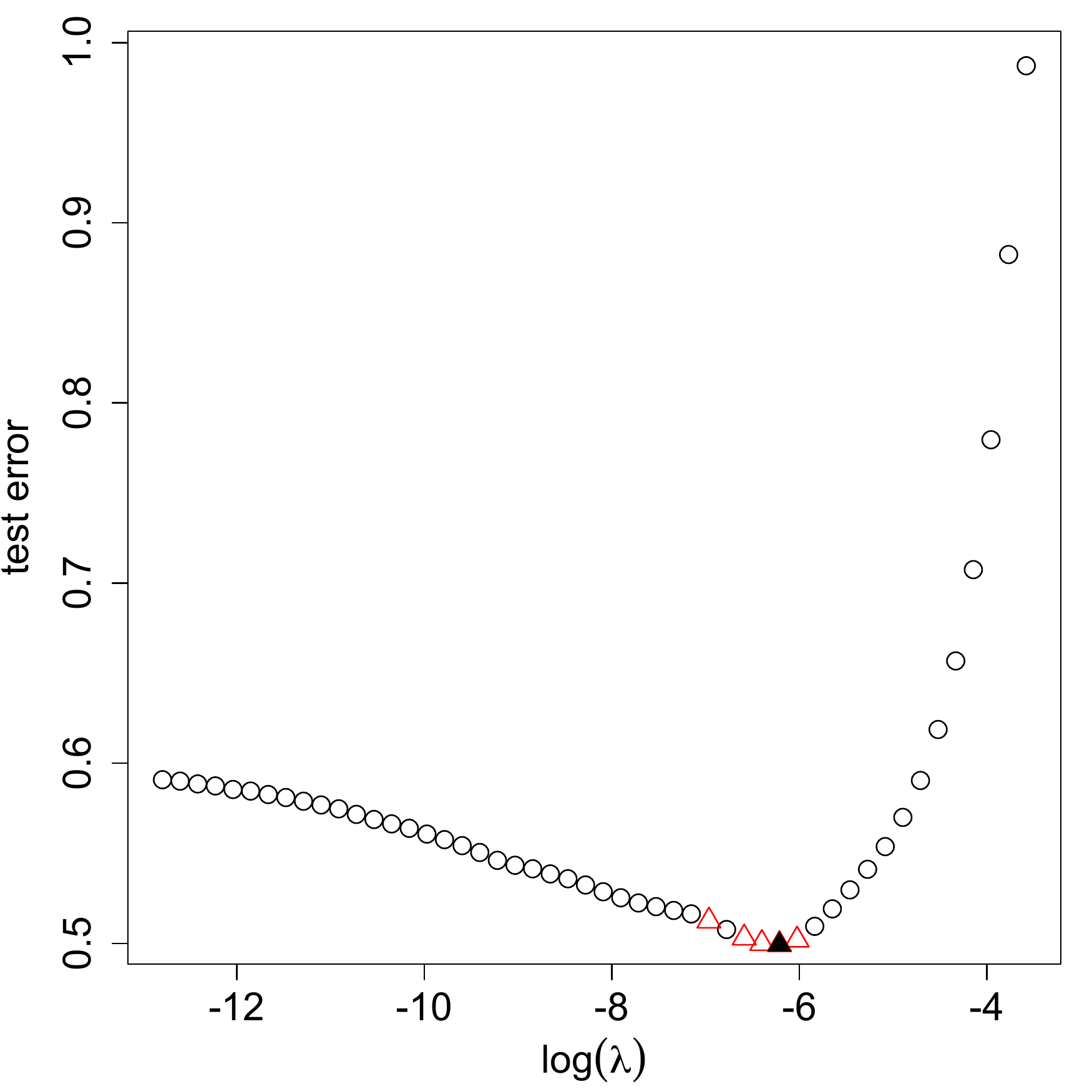}
    \caption{Diabetes data example.  The cross-validation test errors for 50 values of $\lambda$. The triangles correspond to
    the $\lambda$ values in $\mathcal A_{\rm cv}$ given by 5-fold CVC with $\alpha=0.05$.  The solid triangle corresponds
    to the $\lambda$ chosen by standard 5-fold cross-validation.  \label{fig:diabetes1}}
  \end{center}
\end{figure}

To investigate the predictive performance of the final estimate, we split the data so that $300$ sample points are used for estimation, including
implementing cross-validation or CVC, and the remaining $142$ sample points are held out for test error calculation.
To be specific, all competing methods (cv, cvc, 1se, es) are based on  5-fold cross-validation in picking $\lambda$, and then obtain the final 
estimate of regression coefficient
using Lasso with the chosen $\lambda$ on the full training sample of $300$ data points. As we explained in the previous subsection, the CVC method uses a scaled value of $\lambda$ to offset the training sample size difference when obtaining the final estimate. Then we use
the untouched set of $142$ hold-out sample points to calculate the test error for the final estimate. We repeat this experiment
$100$ times, each time with an independent hold-out split. 

The results are summarized in \Cref{fig:diabetes2}.
Comparing to the simulation results in \Cref{sec:sim_lasso}, the diabetes data set is more similar to a low-dimensional model, where more regularization can be used for better interpretation without losing prediction accuracy.
We see that the most parsimonious CVC model gives very close test error as the standard cross-validation, but has much fewer predictors.  The 1se method is also very competitive, with even more interpretable estimates and a slightly larger predictive risk.

\begin{figure}[t]
  \begin{center}
    \includegraphics[scale=0.7]{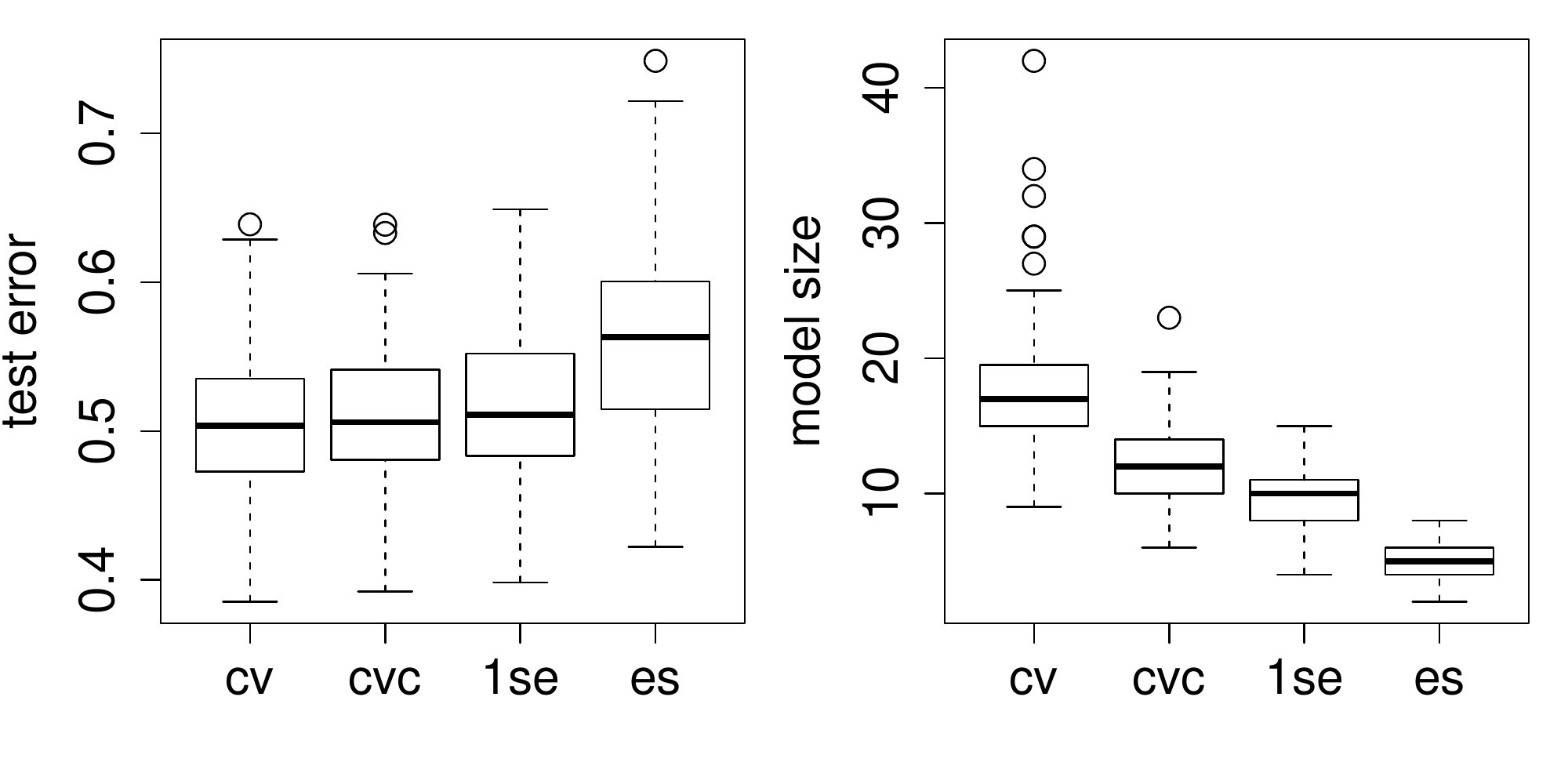}
    \caption{Diabetes data example.  Box-plots of hold-out test errors and 
    number of selected variables over 100 repeated hold-out splitting.
    Compared estimators are the standard cross-validation (``cv''), the most parsimonious model in $\mathcal A_{\rm cv}$ (``cvc''), the 1-standard-error rule (``1se''), and the estimation stability rule (``es'').
    \label{fig:diabetes2}}
  \end{center}
\end{figure}


\section{Discussion}
\label{sec:discussion}

The wide applicability of cross-validation makes it worthwhile to further extend our understanding of CVC under other
contexts. Several extensions shall be pursued in future work.
First, it would be interesting to study and extend CVC for high dimensional regression model selection problems. 
Second, in our consistency analysis of sparse linear regression in \Cref{sec:linear}, we considered independent 
noise.  If the response variable is
binary, then we need to extend the theory to cover heteroskedastic noises. 
Moreover, it is also possible to extend
the framework of CVC to unsupervised learning problems.  For example, in $K$-means or model-based clustering, one needs to specify the number of clusters.
Given a set of estimated clusters, the value of loss function at a test sample point can be set as the squared distance to the closest cluster center.  Similar extensions
can also be made in low rank matrix approximation using singular value decompositions \citep{OwenP09}
and network community detection \citep{ChenL16}, where various forms of cross-validation can be used for model selection.

\appendix
\section{Proofs}\label{sec:proof}
\subsection{Proof of \Cref{sec:theory_conditional}}\label{sec:proof_1}
In the construction of our $p$-value $\hat p_m$, the main test statistic
is the maximum of the studentized sample mean of $(\xi_{m}^{(i)}:i\in I_{\rm te})$.
The reference distribution is the maximum of a Gaussian vector with covariance
$\hat \Gamma_m$, the sample correlation matrix of $(\xi_{m}^{(i)}:i\in I_{\rm te})$.
Therefore, the main task is to prove that the maximum of a studentized sample mean
has similar distribution as the maximum of a Gaussian random vector with same covariance structure.
A result of this type is given as part of the proof of Theorem 4.3 of \cite{Cherno13many}.
Here we provide a self-contained proof using more compact notation and  under the assumption of
bounded Orlicz $\psi_2$ norm for $(\xi_{m,j}-\mu_{m,j})/\sigma_{m,j}$.  Finally, the unnumbered constants
$c,c'$ may vary from line to line.  Numbered constants, $c_1$, $c_2$, $c_3$, $C_1$, $C_2$, etc, correspond
to those specified in corresponding theorems.

\paragraph{Additional notation.}
Here we fix $m$ and will drop $m$ in the notation. Let $\mu$, $\Sigma$, $\Gamma$ be the mean, covariance and correlation matrix of the $M-1$ dimensional random vector $\xi=(\xi_{m,j}:j\neq m)$ (conditional on $D_{\rm tr}$). The corresponding sample versions are $\hat\mu$, $\hat\Sigma$, and $\hat\Gamma$.  Let $\sigma_j^2$ be the $j$th diagonal element of $\Sigma$, and $\hat\sigma_j$ the
sample version. 
Finally, let $N=M\vee n$.
\begin{lemma}\label{lem:psi_norm}
  Let $W_1$, $W_2$ be random variables, then
  \begin{equation*}
    \|W_1 W_2\|_{\psi_1}\le \|W_1\|_{\psi_2} \|W_2\|_{\psi_2}\,.
  \end{equation*}
\end{lemma}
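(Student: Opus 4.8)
The plan is to reduce the inequality to a pointwise application of Young's inequality followed by the Cauchy--Schwarz inequality for expectations. First I would dispose of the degenerate cases: if either $\|W_1\|_{\psi_2}$ or $\|W_2\|_{\psi_2}$ equals $0$ then the corresponding variable is $0$ almost surely and both sides vanish, and if either norm is $+\infty$ there is nothing to prove. So assume $a:=\|W_1\|_{\psi_2}$ and $b:=\|W_2\|_{\psi_2}$ are both finite and positive. A small preliminary observation I would record is that the defining property survives the infimum, i.e. $\mathbb E\exp\bigl((|W_1|/a)^2\bigr)\le 2$ and $\mathbb E\exp\bigl((|W_2|/b)^2\bigr)\le 2$; this follows because $C\mapsto \mathbb E\exp\bigl((|W_i|/C)^2\bigr)$ is nonincreasing, so taking $C\downarrow a$ (resp. $C\downarrow b$) through values where the expectation is $\le 2$ and applying monotone convergence gives the claim at $C=a$ (resp. $C=b$).

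Next I would apply the elementary inequality $xy\le \tfrac12 x^2+\tfrac12 y^2$ with $x=|W_1|/a$, $y=|W_2|/b$, which yields the pointwise bound
\begin{equation*}
\frac{|W_1 W_2|}{ab}\le \frac12\left(\frac{|W_1|}{a}\right)^2+\frac12\left(\frac{|W_2|}{b}\right)^2 .
\end{equation*}
Exponentiating and using monotonicity of $\exp$, then Cauchy--Schwarz for expectations,
\begin{equation*}
\mathbb E\exp\!\left(\frac{|W_1W_2|}{ab}\right)
\le \mathbb E\!\left[\exp\!\left(\tfrac12 (|W_1|/a)^2\right)\exp\!\left(\tfrac12 (|W_2|/b)^2\right)\right]
\le \left(\mathbb E\exp\!\bigl((|W_1|/a)^2\bigr)\right)^{1/2}\!\left(\mathbb E\exp\!\bigl((|W_2|/b)^2\bigr)\right)^{1/2}.
\end{equation*}
By the preliminary observation the right-hand side is at most $\sqrt{2}\cdot\sqrt{2}=2$, so $ab$ is an admissible constant in the definition of $\|W_1W_2\|_{\psi_1}$, which gives $\|W_1W_2\|_{\psi_1}\le ab=\|W_1\|_{\psi_2}\|W_2\|_{\psi_2}$.

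There is no real obstacle here; the only point requiring a moment's care is the attainment/limiting argument ensuring that $\mathbb E\exp\bigl((|W_i|/\|W_i\|_{\psi_2})^2\bigr)\le 2$ holds at the infimum itself rather than only for strictly larger constants. Everything else is the standard Young plus Cauchy--Schwarz computation, and no additional assumptions on $W_1,W_2$ (such as independence) are needed.
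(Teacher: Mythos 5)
Your proof is correct, and it closes the argument differently from the paper. Both proofs start from the same pointwise Young inequality $|W_1W_2|/(ab)\le \tfrac12(|W_1|/a)^2+\tfrac12(|W_2|/b)^2$ with $a=\|W_1\|_{\psi_2}$, $b=\|W_2\|_{\psi_2}$, but the paper then stays at the level of Orlicz norms: it applies monotonicity and the triangle inequality of $\|\cdot\|_{\psi_1}$ and uses the identity $\|W^2\|_{\psi_1}=\|W\|_{\psi_2}^2$ to conclude that each of the two resulting terms has $\psi_1$ norm equal to one. You instead drop down to expectations, exponentiate, and apply Cauchy--Schwarz, verifying directly that $\mathbb E\exp\bigl(|W_1W_2|/(ab)\bigr)\le\sqrt2\cdot\sqrt2=2$, so that $ab$ is an admissible constant. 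Your route avoids invoking the norm axioms for $\|\cdot\|_{\psi_1}$ altogether and makes explicit the two points the paper leaves implicit: the degenerate cases $a=0$ or $b=0$, and the fact (via monotone convergence) that the defining bound $\mathbb E\exp\bigl((|W_i|/\|W_i\|_{\psi_2})^2\bigr)\le 2$ holds at the infimum itself, which is what makes the final admissibility claim legitimate. The paper's version is shorter and emphasizes reusable norm identities; yours is more self-contained and elementary. Either is acceptable.
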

\begin{proof}
\begin{align*}
  &\left\|\frac{W_1 W_2}{\|W_1\|_{\psi_2}\|W_2\|_{\psi_2}}\right\|_{\psi_1}
  \le \left\|\frac{W_1^2}{2\|W_1\|_{\psi_2}^2}+\frac{W_2^2}{2\|W_2\|_{\psi_2}^2}\right\|_{\psi_1}\\
  \le & \frac{1}{2}\left\|\frac{W_1^2}{\|W_1\|_{\psi_2}^2}\right\|_{\psi_1}+\frac{1}{2}\left\|\frac{W_2^2}{\|W_2\|_{\psi_2}^2}\right\|_{\psi_1}= 1\,. \qedhere
\end{align*}  
\end{proof}

\begin{lemma}\label{lem:concentration} There exist universal positive constants $c_1$, $c_2$
 such that if \eqref{eq:tail_assumption} holds with 
  $(A_n^2+1)\le  c_1\sqrt{n/\log N}$ then the following holds.
\begin{equation}\label{eq:concentration_mu}
\mathbb P\left[\max_j\left|\frac{\hat\mu_j-\mu_j}{\sigma_j}\right|\ge c_2 A_n\sqrt{\log N/n}\right]\le N^{-1}\,.  
\end{equation}
\begin{equation}\label{eq:concentration_sigma}
\mathbb P\left[\max_j\left|\frac{\hat\sigma^2_j}{\sigma^2_j}-1\right|\ge c_2 A_n^2 \sqrt{\log N/n}\right]\le N^{-1}\,.
\end{equation}
\begin{equation}\label{eq:concentration_gamma}
  \mathbb P\left[\max_{j,j'}\left|\hat \Gamma_{j,j'}-\Gamma_{j,j'}\right|\ge c_2 A_n^2\sqrt{\log N/n}\right]\le N^{-1}.
\end{equation}
\end{lemma}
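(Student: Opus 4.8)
The plan is to establish \eqref{eq:concentration_mu}, \eqref{eq:concentration_sigma} and \eqref{eq:concentration_gamma} one at a time, in each case reducing to a one-dimensional tail bound for a sum of i.i.d.\ sub-exponential (or products of sub-exponential) random variables and then taking a union bound over the at most $N^2$ indices involved. Write $Z_j^{(i)}=(\xi_{m,j}^{(i)}-\mu_j)/\sigma_j$, so that the $Z_j^{(i)}$ are i.i.d.\ across $i$, mean zero, unit variance, with $\|Z_j^{(i)}\|_{\psi_1}\le A_n$ by \eqref{eq:tail_assumption}, and take $A_n\ge 1$ without loss of generality. The hypothesis $(A_n^2+1)\le c_1\sqrt{n/\log N}$ is exactly what keeps the deviation level $A_n^2\sqrt{\log N/n}$ small; in particular $n\gtrsim A_n^4\log N$, so throughout we stay in the sub-Gaussian regime of the relevant Bernstein inequality and the lower-order cross terms produced below are negligible.

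For \eqref{eq:concentration_mu}: $(\hat\mu_j-\mu_j)/\sigma_j=n^{-1}\sum_i Z_j^{(i)}$ is an average of i.i.d.\ centered variables of $\psi_1$-norm $\le A_n$, so Bernstein's inequality gives $\mathbb P\big(|n^{-1}\sum_i Z_j^{(i)}|\ge t\big)\le 2\exp\big(-c\,n\min\{t^2/A_n^2,t/A_n\}\big)$. At $t=c_2A_n\sqrt{\log N/n}$ the first term of the minimum is $c\,c_2^2\log N$ and, since $n\gtrsim\log N$, the second is $\gtrsim c_2\log N$; a union bound over the $M-1<N$ coordinates with $c_2$ large enough yields \eqref{eq:concentration_mu}.

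For \eqref{eq:concentration_sigma}: $\hat\sigma_j^2/\sigma_j^2=n^{-1}\sum_i (Z_j^{(i)})^2-\bar Z_j^2$, where $\bar Z_j=n^{-1}\sum_iZ_j^{(i)}$. The first piece is centered at $\mathbb E(Z_j^2)=1$; each summand $(Z_j^{(i)})^2-1$ is a centered square of a sub-exponential variable (under the $\psi_2$ version of the hypothesis it is itself sub-exponential by \Cref{lem:psi_norm}), and the corresponding Bernstein-type bound on the scale $t=c_2A_n^2\sqrt{\log N/n}$ — which, given $n\gtrsim A_n^4\log N$, is in its Gaussian regime — gives per-coordinate probability $\le 2\exp(-c\,c_2\log N)$. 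On the event of \eqref{eq:concentration_mu}, $\bar Z_j^2\le c_2^2A_n^2\log N/n=o\big(A_n^2\sqrt{\log N/n}\big)$ is absorbed into the constant; a union bound over $j$ completes \eqref{eq:concentration_sigma}. For \eqref{eq:concentration_gamma}, write $\hat\Gamma_{j,j'}=\widehat{\mathrm{Cov}}(Z_j,Z_{j'})\big/\big(\widehat{\mathrm{Var}}(Z_j)^{1/2}\widehat{\mathrm{Var}}(Z_{j'})^{1/2}\big)$ with $\widehat{\mathrm{Cov}}(Z_j,Z_{j'})=n^{-1}\sum_i Z_j^{(i)}Z_{j'}^{(i)}-\bar Z_j\bar Z_{j'}$. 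Because $|Z_jZ_{j'}|\le (Z_j^2+Z_{j'}^2)/2$, the products $Z_j^{(i)}Z_{j'}^{(i)}$ enjoy the same tail control as the squares above, so the argument for \eqref{eq:concentration_sigma}, union-bounded over the $(M-1)^2<N^2$ pairs, gives $\max_{j,j'}|\widehat{\mathrm{Cov}}(Z_j,Z_{j'})-\Gamma_{j,j'}|\le c\,A_n^2\sqrt{\log N/n}$ with probability $\ge 1-N^{-1}$. Intersecting with the event of \eqref{eq:concentration_sigma}, each $\widehat{\mathrm{Var}}(Z_j)\in[1-\epsilon,1+\epsilon]$ with $\epsilon\asymp A_n^2\sqrt{\log N/n}$, and expanding $(\Gamma_{j,j'}+O(\epsilon))(1+O(\epsilon))^{-1}$ with $|\Gamma_{j,j'}|\le 1$ gives $|\hat\Gamma_{j,j'}-\Gamma_{j,j'}|=O\big(A_n^2\sqrt{\log N/n}\big)$, which is \eqref{eq:concentration_gamma} up to the constant.

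The main obstacle is the quadratic nature of the fluctuations in \eqref{eq:concentration_sigma} and \eqref{eq:concentration_gamma}: a product of two sub-exponential variables is only $\psi_{1/2}$, so one must either pass through the $\psi_2$ hypothesis — where \Cref{lem:psi_norm} restores sub-exponentiality and ordinary Bernstein applies cleanly — or use a Bernstein-type inequality for $\psi_{1/2}$ sums, and in either case check that the target deviation scale $A_n^2\sqrt{\log N/n}$ together with $(A_n^2+1)\le c_1\sqrt{n/\log N}$ keeps the relevant exponent of order $\log N$ (not a fractional power of it), which is what survives the union bound. The rest — negligibility of the $\bar Z$ cross terms and the passage from sample covariance to sample correlation through the already-controlled $\hat\sigma_j$ — is routine.
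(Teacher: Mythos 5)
Your proof is correct and follows essentially the same route as the paper: the same decomposition of $\hat\sigma_j^2$ and $\hat\Sigma_{j,j'}$ into a centered quadratic average minus a product of mean deviations, Bernstein-type per-coordinate concentration at the scales $A_n\sqrt{\log N/n}$ and $A_n^2\sqrt{\log N/n}$, union bounds over at most $N^2$ indices, and the passage from sample covariance to sample correlation via the already-controlled $\hat\sigma_j$. The only minor difference is the tool used for the quadratic terms: the paper bounds $\bigl\| |(\xi_j-\mu_j)^2/\sigma_j^2-1|^{1/2}\bigr\|_{\psi_1}$ and invokes Cram\'er's moderate deviation together with \eqref{eq:concentration_mu}, whereas you pass through the $\psi_2$ reading of the hypothesis and \Cref{lem:psi_norm} so that ordinary sub-exponential Bernstein applies, which is precisely the simplification the paper itself adopts in its self-contained proof.
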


\begin{proof}[Proof of \Cref{lem:concentration}]
  Since a smaller $A_n$ means better concentration, 
  without loss of generality we can assume that $A_n\ge 1$.
  
  The first inequality directly follows from the fact that $\|Z\|_{\psi_1}\le \|Z\|_{\psi_2}$,
  the Bernstein's inequality using Orlicz norm (Lemma 2.2.11 of \cite{vdvWellner}), and union bound.
  
  For the second inequality, note that
  \begin{align}
    &\left\|\left|\left(\frac{\xi_j-\mu_j}{\sigma_j}\right)^2-1\right|^{1/2}\right\|_{\psi_1}\le \left\|\frac{\xi_j-\mu_j}{\sigma_j}\right\|_{\psi_1}+\|1\|_{\psi_1}
    = A_n + (\log 2)^{-1}\,.\label{eq:psi_1_bound_sigma}
  \end{align}
  
We can re-write $\hat\sigma_j^2/\sigma_j^2-1$ as
  \begin{align*}
    \frac{\hat\sigma_j^2-\sigma_j^2}{\sigma_j^2}
    =n_{\rm te}^{-1}\sum_{i\in I_{\rm te}}
    \left[\left(\frac{\xi_j^{(i)}-\mu_j}{\sigma_j}\right)^2-1\right]-
    \left(\frac{\hat\mu_j-\mu_j}{\sigma_j}\right)^2\,.
  \end{align*}
Then the claim follows by applying Cramer's moderate deviation (see, e.g., Theorem 2.13 of \cite{PenaLS08})
with centered and scaled random variable $A_n^{-2}[(\xi_j-\mu_j)^2/\sigma_j^2 - 1]$ and deviation
$c\sqrt{\log N}$, together with \eqref{eq:concentration_mu} and union bound.

For the third inequality,
 we can re-write the off-diagonal error in $\Sigma$
  \begin{align*}
    \hat\Sigma_{j,j'}-\Sigma_{j,j'}=n_{\rm te}^{-1}\sum_{i\in I_{\rm te}} (\xi_j^{(i)}-\mu_j)(\xi_{j'}^{(i)}-\mu_{j'}) - (\hat\mu_j-\mu_j)(\hat\mu_{j'}-\mu_{j'})
  \end{align*}

A similar argument as \eqref{eq:psi_1_bound_sigma} shows that 
$$
\left\|\left|\frac{\xi_j-\mu_j}{\sigma_j}\frac{\xi_{j'}-\mu_{j'}}{\sigma_{j'}}-\Gamma_{j,j'}\right|^{1/2}\right\|_{\psi_1}\le
A_n+(\log 2)^{-1}\,.
$$
Then the same argument leads to the desired result by combining the deviation bound on
$\hat\Sigma-\Sigma$ and \eqref{eq:concentration_sigma}.
  %
%
\end{proof}

\begin{proof}[Proof of \Cref{thm:conditional}]
  Let $E$ denote the event that $\max_{j}|\hat\mu_j-\mu_j|/\sigma_j\le c_2 A_n\sqrt{\log N/n}$, $\max_j |\hat\sigma_j/\sigma_j-1|\le c_2 A_n^2 \sqrt{\log N/n}$, and $\max_{j,j'}|\hat\Gamma_{j,j'}-\Gamma_{j,j'}|\le c_2 A_n^2 \sqrt{\log N/n}$.
  \Cref{lem:concentration} implies that under the condition of theorem we have $\mathbb P(E)\ge 1-3 N^{-1}$.
  
  Let $\Lambda$ and $\hat\Lambda$ be the diagonal matrix corresponding to $\Sigma$ and $\hat\Sigma$,
  respectively. 
  For any positive semidefinite matrix $\Sigma$, let $Z_{\Sigma}$ be an $N(0,\Sigma)$ random vector.
  For $\alpha\in (0,1)$, let $z(\alpha,\Sigma)$
  be the upper $\alpha$ quantile of the maximum of $Z_{\Sigma}$.  
  
  For the first part of the theorem, by assumption we have $\max_{j}\mu_j/\sigma_j\le \gamma_n$
  with $\gamma_n=r_n\sqrt{1/(n\log N)}$. Recall that $\hat p$ is the bootstrap $p$-value given in \Cref{sec:p_val}.
  Here we ignore the bootstrap sampling variability and regard $\hat p$ as the limiting
  value when the bootstrap sample $B\rightarrow\infty$. We have
\begin{align}
  &\mathbb P(\hat p\le \alpha)\nonumber\\
   = &\mathbb P \left[\sqrt{n_{\rm te}}\max_j \left(\frac{\hat\mu_j - \mu_j}{\hat\sigma_j} + \frac{\mu_j}{\hat\sigma_j} \right) \ge z(\alpha,\hat \Gamma)\right]\nonumber\\
   \le & \mathbb P \left[\sqrt{n_{\rm te}}\max_j \left(\frac{\hat\mu_j - \mu_j}{\hat\sigma_j} + \frac{\mu_j}{\hat\sigma_j} \right) \ge z(\alpha,\hat \Gamma),~E\right] +\mathbb P(E^c)\nonumber \\
   \le & \mathbb P \left[\sqrt{n_{\rm te}}\max_j \frac{\hat\mu_j - \mu_j}{\sigma_j} \ge z(\alpha,\hat \Gamma)-c\sqrt{n}\gamma_n-c_2^2A_n^3\frac{\log N}{\sqrt{n}},~E\right] +\mathbb P(E^c) \label{eq:E_approx}\\ 
   \le & \mathbb P \left[\sqrt{n_{\rm te}}\max_j \frac{\hat\mu_j - \mu_j}{\sigma_j} \ge z(\alpha+\delta,\Gamma)-c \sqrt{n}\gamma_n-c_2^2A_n^3\frac{\log N}{\sqrt{n}},~E\right] +\mathbb P(E^c)\label{eq:delta_approx} \\  
  \le & \mathbb P\left[\max Z_{\Gamma} \ge z(\alpha+\delta,\Gamma)-c\sqrt{n} \gamma_n-c_2^2A_n^3\frac{\log N}{\sqrt{n}}\right] + C_1 n^{-C_2} + 3 N^{-1}\label{eq:Gauss_approx}\\
  \le &\alpha+\delta + \left(c \sqrt{n}\gamma_n+c_2^2A_n^3\frac{\log N}{\sqrt{n}}\right)2(1+\sqrt{2\log N}) +C_1 n^{-C_2} + 3 N^{-1}\label{eq:anti_concentration}\\
  \le &\alpha+c\max\left(\left(\frac{A_n^4\log^5 N}{n}\right)^{\frac{1}{6}},
  r_n,\left(\frac{A_n^6\log^3 N}{n}\right)^{\frac{1}{2}},N^{-(C_2\wedge 1)}\right)\,,
\end{align}
where
\begin{itemize}
  \item \eqref{eq:E_approx} follows from the concentration of $\hat\mu_j-\mu_j$ and $\hat\sigma_j/\sigma_j-1$, and the definition of $E$.
  \item \eqref{eq:delta_approx} follows from the difference between maxima of Gaussian random vectors (Theorem 2 of \cite{Cherno2}). Here $\delta = c A_n^3 (\log^{5/6}N) n^{-1/6}$ for some universal constant $c$.
  \item \eqref{eq:Gauss_approx} follows from the Gaussian comparison of maxima of mean vectors (Corollary 2.1 of \cite{Cherno1}) with the $\psi_1$  norm condition and
  $\log ^7N \le c n^{1-c'}$ for some constants $c,c'$. Here $C_1$ and $C_2$ in \eqref{eq:Gauss_approx} are universal constants involved in the Gaussian comparison result.
  \item \eqref{eq:anti_concentration} follows from the anti-concentration of maxima of Gaussian random vectors (Theorem 3 of \cite{Cherno2}).
\end{itemize}
When $\mu_j=0$ for all $j$, the above argument also goes through in the other direction. In this case we have $\mathbb P(\hat p\le \alpha)=\alpha+o(1)$.  The first part of the theorem is complete.

For the second part. Let $j$ be such that $\frac{\mu_j}{\sigma_j}\ge r\sqrt{\frac{\log N}{n}}$. Then
\begin{align*}
  \mathbb P(\hat p\le \alpha) =  & \mathbb P\left[\sqrt{n_{\rm te}}\max_{j'}\frac{\hat\mu_{j'}}{\hat\sigma_{j'}}\ge z(\alpha,\hat\Gamma)\right]\\
  \ge & \mathbb P\left[\sqrt{n_{\rm te}}\frac{\hat\mu_{j}}{\hat\sigma_j}\ge z(\alpha,\hat\Gamma)\right]\\
\ge & \mathbb P\left[\sqrt{n_{\rm te}}\frac{\hat\mu_{j}}{\hat\sigma_j}\ge z(\alpha,\hat\Gamma),E\right]
\end{align*}
On event $E$, we have
\begin{align*} \frac{\hat\mu_j}{\hat\sigma_j}=
  &\left(\frac{\mu_j}{\sigma_j}+\frac{\hat\mu_j-\mu_j}{\sigma_j}\right) \left(\frac{\sigma_j}{\hat\sigma_j}-1+1\right)
\\
  \ge &
  \left(r-c_2A_n\right)\sqrt{\frac{\log N}{n}}\left(1-c_2A_n^2\sqrt{\frac{\log N}{n}}\right)
  \ge c \sqrt{\log N/n}\,,
\end{align*}
for some constant $c>0$ when $n$ is large enough. Here the last inequality follows because
by assumption we have $r\ge c'(A_n \vee 1)$ for some large enough constant $c'$.

On the other hand, using Mill's inequality and union bound we can directly verify that
$z(\alpha,\hat\Gamma)\le \sqrt{2(\log (M-1)-\log\alpha)} \le 2\sqrt{\log N}$ whenever $\alpha\ge n^{-1}$. Thus when $n$ is large enough we always have $\sqrt{n_{\rm te}}\hat\mu_{j}/\hat\sigma_j\ge z(\alpha,\hat\Gamma)$ on $E$. As a result,
\begin{align*}
&\mathbb P(\hat p\le \alpha)\ge \mathbb P(E)\ge 1-3N^{-1}.  \qedhere  
\end{align*}
\end{proof}

\subsection{Proofs for \Cref{sec:linear}}\label{sec:proof_2}

\begin{proof}[Proof of \Cref{thm:linear}]
  Let $\mathcal M=\{1,2,...,M\}$.
  For any $m\in\mathcal M$, let $J_m$ be the subset of active variables in model $m$. Define
   $\mathcal M_1=\{m\in\mathcal M:J_{m^*}\backslash J_m\neq \emptyset\}$ be the candidate models missing at least one true variable,
   and $\mathcal M_2=\mathcal M\backslash \left(\mathcal M_1 \cup\{m^*\}\right)$.
  We use ``$a\lesssim b$'' to denote that
  $a\le c b$ for some universal constant $c$.

By consistency  of least square estimates we have $\sup_{m\in\mathcal M}\|\hat\beta_m-\beta_m\|_2=o_{P_{\rm tr}}(1)$, where $\beta_{m}= \Sigma_{J_m,J_m}^{-1}\mathbb E(X_{J_m}^T Y)$ is the population version, and the notation $o_{P_{\rm tr}}(\cdot)$
emphasizes that the randomness is on the training data $D_{\rm tr}$.
In the following we overload the notation of $\hat\beta_m$ and $\beta_m$ by embedding
them in $\mathbb R^p$, with zero values in the entries in $J_m^c$.

For two candidate models $m,m'$,
the difference of squared error is
  \begin{align*}
    \xi_{m,m'}=&(Y-X^T\hat\beta_m)^2-(Y-X^T\hat\beta_{m'})^2\\=&2\epsilon X^T(\hat\beta_{m'}-\hat\beta_m)+(X^T(\hat\beta_m-\beta))^2-(X^T(\hat\beta_{m'}-\beta))^2\,.
  \end{align*}

Now we show that for any $m\in \mathcal M_1$, we have $P(m\in\mathcal A_n)\rightarrow 0$.

Let $m'=m^*$, then we have
\begin{align}
  \mathbb E(\xi_{m,m^*}\mid D_{\rm tr})=&(\hat\beta_m-\beta)^T\Sigma(\hat\beta_m-\beta)+
  (\hat\beta_{m^*}-\beta)^T\Sigma(\hat\beta_{m^*}-\beta)\nonumber\\\ge& \beta_{\min}^2\lambda_{\min}+o_{P_{\rm tr}}(1)\label{eq:consistency_lowerbound_M1}
\end{align} %
%
so that
\begin{equation}\label{eq:consistency_mu_low}
  \hat\mu_{m,m^*}\ge \beta_{\min}^2\lambda_{\min} +o_P(1)\,.
\end{equation}
Since CVC uses a studentized test statistic, we still need to provide an upper bound for $\hat\sigma_{m,m^*}$. First we have, using the assumption that $X$ has finite fourth moment and $\epsilon X$ has finite second moment,
\begin{align}
  &{\rm Var}(\xi_{m,m^*}\mid D_{\rm tr})\nonumber\\=&4\sigma^2 (\hat\beta_m-\hat\beta_{m^*})^T\Sigma (\hat\beta_m-\hat\beta_{m^*})+\mathbb E\left[(\hat\beta_m+\hat\beta_{m^*}-2\beta)^T(XX^T-\Sigma)
  (\hat\beta_m-\hat\beta_{m^*})\right]^2\nonumber\\
  =& O_{P_{\rm tr}}\left(  \sigma^2 \lambda_{\max}\beta_{\max}^2+\beta_{\max}^4\lambda_{\max}^2\right)=O_{P_{\rm tr}}(1)\,.\label{eq:ols_xi_var_upper}
\end{align}
Thus
\begin{equation}\label{eq:consistency_var_up}
  \hat\sigma_{m,m^*}=O_{P}(1)\,.
\end{equation}
Combining \eqref{eq:consistency_mu_low} and \eqref{eq:consistency_var_up} we have
\begin{equation}\label{eq:consistency_student_low}
\mathbb P\left[\sqrt{n_{\rm te}}\frac{\hat\mu_{m,m^*}}{\hat\sigma_{m,m^*}}\ge \sqrt{2\log n}\right]\rightarrow 1\,.
\end{equation}
As a consequence, the rejection probability of model $m$ approaches $1$ as $n$ increases.

Now consider $m\in\mathcal M_2$. For brevity we denote $J_m$ as $J$ and 
  $J_{m^*}$ as $J^*$.
 Let $\vec\epsilon$ denote the realized vector of $\epsilon$ in $D_{\rm tr}$,
 $X_{J}$ denote the \emph{training data} columns of $X$ in $J$, 
 and $\Sigma_{J}$ denote the corresponding population covariance matrix.
  \begin{align*}
    &\mathbb E\left(\left[X^T (\hat\beta_m-\beta)\right]^2|D_{\rm tr}\right)\nonumber\\
     =& \vec\epsilon ^T X_{J} (X_{J}^T X_{J})^{-1} \Sigma_{J}(X_{J,\rm tr}^T X_{J})^{-1} X_{J}^T \vec\epsilon\\
    =&\vec\epsilon ^T X_{J} (X_{J}^T X_{J})^{-1}\left[ \Sigma_{J}-(X_{J}^T X_{J}/n_{\rm tr})\right](X_{J}^T X_{J})^{-1} X_{J}^T \vec\epsilon\\
    &\quad +
    n_{\rm tr}^{-1}\vec\epsilon ^T X_{J} (X_{J}^T X_{J})^{-1} (X_{J}^T X_{J})(X_{J}^T X_{J})^{-1} X_{J}^T \vec\epsilon\\
    =&\vec\epsilon ^T X_{J} (X_{J}^T X_{J})^{-1}\left[ \Sigma_{J}-(X_{J}^T X_{J}/n_{\rm tr})\right](X_{J}^T X_{J})^{-1} X_{J}^T \vec\epsilon\\
    &\quad +
    n_{\rm tr}^{-1}\vec\epsilon ^T X_{J} (X_{J}^T X_{J})^{-1}  X_{J}^T \vec\epsilon
  \end{align*}
  where $P_J$ is the projection matrix of the linear subspace spanned by the columns of $X_J$. 
  
  Therefore
  \begin{align*}
n_{\rm tr}  \mathbb E(\xi_{m,m^*}\mid D_{\rm tr}) &=\vec\epsilon ^T (P_{J}-P_{J^*}) \vec\epsilon + R_n\,.
  \end{align*}
  where
  $$
  R_n=n_{\rm tr}\vec\epsilon ^T X_{J} (X_{J}^T X_{J})^{-1}\left[ \Sigma_{J}-(X_{J}^T X_{J}/n_{\rm tr})\right](X_{J}^T X_{J})^{-1} X_{J}^T\,.
  $$

Since $P_J - P_{J^*}$ is a projection matrix, we can write $P_J-P_{J^*}=\sum_{l=1}^L v_l v_l^T$, where
$L\ge 1$ is the number of irrelevant variables in model $J$, and $v_1,...,v_L$ are orthonormal vectors in $\mathbb R^{n_{\rm tr}}$. Thus
$\vec\epsilon ^T (P_{J}-P_{J^*}) \vec\epsilon\ge (v_1^T\vec\epsilon)^2$.
Note that $\|v_1\|_\infty \le \max_{1\le i\le n} H_{ii}^{1/2}$ where $H$ is the hat matrix with all columns of data.  Let $r_n$ be a positive sequence such that $r_n\uparrow\infty$ and $r_n =o(n^{1/4})$, then we have
\begin{align*}
  &\mathbb P\left[\mathbb E\left(\xi_{m,m^*}\mid D_{\rm tr}\right)<0\right]\le
  \mathbb P\left[(v_1^T\vec\epsilon)^2+R_n<0\right]\\
  \le &\mathbb P\left[|v_1^T\vec\epsilon|< |R_n|^{1/2}\right]\\
  \le & \mathbb P\left[|v_1^T\vec\epsilon|< r_n^{1/2}n^{-1/4}\right]+
  \mathbb P\left[|R_n|> r_n n^{-1/2}\right]\\
  \le & \mathbb P\left[|v_1^T\vec\epsilon|< r_n^{1/2}n^{-1/4}~\big|~ \|v_1\|_\infty \le 2 r_n^{1/2}n^{-1/4}\right] \\
  &\quad+ \mathbb P\left[\|v_1\|_\infty > 2r_n^{1/2}n^{-1/4}\right]+\mathbb P\left[|R_n|> r_n n^{-1/2}\right]\\
  =& I + II + III\,.
\end{align*}
For term $I$, we use anti-concentration inequality. 
Because $\epsilon$ has mean zero and positive variance, there exist constants $L>0$ and $\eta\in(0,1)$, such that
$\sup_{t}P(\epsilon\in[t,t+L])\le 1-\eta$.  
Let $v_1=(v_{1,j}:1\le j\le n)$.  If $v_{1,j}\neq 0$, then $\sup_{t}P(v_{1,j}\epsilon_j\in[t,t+|v_{1,j}|L])\le 1-\eta$.
Now applying the anti-concentration inequality for sums of independent random variables (Theorem 1 of \cite{Rogozin61}), we have, when $\|v_1\|_\infty\le 2r_n^{1/2}n^{-1/4}$,
\begin{align*}
  &\mathbb P\left(|v_1^T\vec\epsilon|\le r_n^{1/2}n^{-1/4}\big| v_1\right)\\
  \lesssim &\frac{r_n^{1/2}n^{-1/4}}{\sqrt{\sum_{j:v_{1,j}\neq 0} v_{1,j}^2L^2(1-\eta)}}
  \lesssim r_n^{1/2}n^{-1/4}=o(1)\,.
\end{align*}

For term $II$, we have
\begin{align*}
  \mathbb P\left[\|v_1\|_\infty > 2r_n^{1/2}n^{-1/4}\right]
  \le  \mathbb P\left[\max_i H_{ii}> 4 r_n n^{-1/2}\right]=o(1)
\end{align*}
where the last equation follows from the assumption that $\max_i H_{ii}=o_P(n^{-1/2})$.

For term $III$, we have $III=o(1)$ since $r_n\rightarrow\infty$ and $R_n=O_P(n^{-1/2})$, according to the assumption that $\epsilon X$ has finite second moment and $X$ has finite fourth moment.

Now we have shown that
\begin{equation}
  \mathbb P\left[\mathbb E(\xi_{m,m^*}\mid D_{\rm tr})<0\right]=o(1)\,,~~\forall~ m\in\mathcal M_2\,.\label{eq:consistency_lowerbound_M2}
\end{equation}
Combining \eqref{eq:consistency_lowerbound_M1} and \eqref{eq:consistency_lowerbound_M2} we have
$$
\mathbb P\left[\sup_{m\neq m^*}\mathbb E(\xi_{m^*,m}\mid D_{\rm tr})>0\right]=o(1)\,.
$$

Now conditioning on the event $\sup_{m\neq m^*}\mathbb E(\xi_{m^*,m} \mid D_{\rm tr})<0$, we have
\begin{align}
  &\mathbb P\left[\sup_{m\neq m^*}\sqrt{n_{\rm te}}\frac{\hat\mu_{m^*,m}}{\hat\sigma_{m^*,m}}\ge z(\alpha_n,\hat\Gamma)\right]\label{eq:consistency_cover_true}\\
  \le & \mathbb P\left[\sup_{m\neq m^*}\sqrt{n_{\rm te}}\frac{\hat\mu_{m^*,m}-\mu_{m^*,m}}{\hat\sigma_{m^*,m}}\ge z(\alpha_n,\hat\Gamma)\right]\nonumber\\
  =&o(1)\,,\nonumber
\end{align}
where the last step we use the fact that $\sqrt{n_{\rm te}}\frac{\hat\mu_{m^*,m}-\mu_{m^*,m}}{\hat\sigma_{m^*,m}}=O_{P_{\rm te}}(1)$ according to the self-normalized moderate deviation result (Theorem 7.4 of \cite{PenaLS08}), which holds since by condition B2 $\xi_{m^*,m}$ has finite third moment provided that $\hat \beta_{J_m^*}$ and $\hat \beta_{J_m}$ are bounded (with probability $1-o_{P_{\rm tr}}(1)$).
\end{proof}

\begin{proof}[Proof of \Cref{thm:linear_vfold}]
For given $v\in \{1,2,...,V\}$, we treat $I_v$ as $I_{\rm te}$ and $I_{-v}$ as $I_{\rm tr}$.  Using the same reasoning as from \eqref{eq:consistency_lowerbound_M1} to \eqref{eq:consistency_var_up} on the $v$th fold and then using union bound over $v=1,...,V$, we have, if $m\in\mathcal M_1$, then with probability $1-o(1)$
$$
\inf_{v=1,...,V}\frac{\hat \mu_{m,m^*}^{(v)}}{\hat\sigma_{m,m^*}^{(v)}}\ge c_3
$$
where $c_3>0$ is a constant depending on the distribution of $(X,\epsilon)$.

As a result,
\begin{align*}
  \frac{\hat\mu_{m,m^*}}{\hat\sigma_{m,m^*}}=
  \frac{\frac{1}{V}\sum_{v=1}^V \hat\mu_{m,m^*}^{(v)}}{\sqrt{\frac{1}{V}\sum_{v=1}^V(\hat\sigma_{m,m^*}^{(v)})^2}}\ge \frac{\frac{1}{V}c_3\sum_{v=1}^V\hat\sigma_{m,m^*}^{(v)}}{\sqrt{\frac{1}{V}\sum_{v=1}^V(\hat\sigma_{m,m^*}^{(v)})^2}}\ge \frac{c_3}{\sqrt{V}}\,.
\end{align*}
When $n$ is large so that $c_3\sqrt{n/V}\ge \sqrt{2\log N}$, then (let $\hat\Gamma$ be the sample correlation matrix of $[\xi_{m,j}^{(i)}-\hat\mu_{m,j}^{(v_i)}:j\neq m]_{i=1,...,n}$)
\begin{align*}
  \mathbb P(\hat p_{{\rm cv},m}\le \alpha)\ge & \mathbb P\left[\sqrt{n}\frac{\hat\mu_{m,m^*}}{\hat\sigma_{m,m^*}}\ge z(\alpha_n,\hat\Gamma)\right]
  \ge  P\left[\sqrt{n}\frac{\hat\mu_{m,m^*}}{\hat\sigma_{m,m^*}}
  \ge \sqrt{2\log N}\right]\\
  \ge & P\left[\sqrt{n}\frac{\hat\mu_{m,m^*}}{\hat\sigma_{m,m^*}}\ge c_3\sqrt{n/V}\right]=1-o(1)\,.
\end{align*}

Next we prove $\mathbb P(m^*\in\mathcal A_{\rm cv})\rightarrow 1$.
Applying the same argument as in the proof of \Cref{thm:linear} to each fold
 we have $$\mathbb P(\mu_{m^*,m}^{(v)}<0,~\forall~m\in\mathcal M\backslash\{m^*\})=1-o(1)\,.$$
Conditioning on this event, and using the same reasoning as in \eqref{eq:consistency_cover_true} we have
\begin{align*}
&\mathbb P\left[\sqrt{n}\frac{\hat\mu_{m^*,m}^{(v)}}{\hat\sigma_{m^*,m}^{(v)}}\ge z(\alpha_n,\hat\Gamma)\right]\\
\le & P\left[\sqrt{n}\frac{\hat\mu_{m^*,m}^{(v)}-\mu_{m^*,m}^{(v)}}{\hat\sigma_{m^*,m}^{(v)}}\ge z(\alpha_n,\hat\Gamma)\right]\\
=&o(1)\,,
\end{align*}
for all $m\neq m^*$ and $1\le v\le V$.
Finally using union bound we have
\begin{align*}
  \mathbb P(m^*\notin \mathcal A_{\rm cv})=&\mathbb P\left[\hat p_{{\rm cv}, m^*}\ge z(\alpha_n,\hat\Gamma)\right]\\
  = & \mathbb P\left[\sup_{m\neq m^*}\sqrt{n}\frac{\hat\mu_{m^*,m}}{\hat\sigma_{m^*,m}}\ge z(\alpha_n,\hat\Gamma)\right]\\
  \le &\mathbb P\left[\sup_{m\neq m^*}\sup_{1\le v\le V}\sqrt{n}\frac{\hat\mu_{m^*,m}^{(v)}}{\hat\sigma_{m^*,m}^{(v)}}\ge z(\alpha_n,\hat\Gamma)\right]\\
  = & o(1)\,.\qedhere
\end{align*}
\end{proof}

%


\bibliographystyle{plain}
\bibliography{cvc}
\end{document}